\title{Fixed-Template Promise Model Checking Problems}
\author{Kristina Asimi}{Department of Algebra, Faculty of Mathematics and Physics, Charles University, Czechia \and \url{}}{asimptota94@gmail.com}{}{}
\author{Libor Barto}{Department of Algebra, Faculty of Mathematics and Physics, Charles University, Czechia \and \url{https://www2.karlin.mff.cuni.cz/~barto/} }{libor.barto@gmail.com}{https://orcid.org/0000-0002-8481-6458}{}
\author{Silvia Butti}{Department of Information and Communication Technologies, Universitat Pompeu Fabra, Spain \and \url{https://sites.google.com/view/silviabutti/}}{silvia.butti@upf.edu}{https://orcid.org/0000-0002-0171-2021}{}
\authorrunning{K. Asimi, L. Barto, and S. Butti}
\keywords{Model Checking Problem, First-Order Logic, Promise Constraint Satisfaction Problem, Multi-Homomorphism} 
\newcommand{\rel}[1]{\mathbb{#1}}
\newcommand{\tuple}[1]{\mathbf{#1}}
\newcommand{\AB}{(\rel A,\rel B)}
\newcommand{\CD}{(\rel C,\rel D)}
\newcommand{\ar}{\mathrm{ar}}
\newcommand{\compl}[1]{\overline{#1}}
\newcommand{\eval}{\varepsilon}
\newcommand{\paragr}[1]{\noindent\textbf{#1}}
\newcommand{\powerset}{\mathcal{P}_{\neq \emptyset}}
\newcommand{\logic}{\mathcal{L}}
\newcommand{\connectives}{\{\exists, \forall, \land, \lor, =, \neq, \neg\}}
\newcommand{\yes}{\mathtt{Yes}}
\newcommand{\no}{\mathtt{No}}
\newcommand{\Lspace}{\mathrm{L}}
\newcommand{\Ptime}{\mathrm{P}}
\newcommand{\NP}{\mathrm{NP}}
\newcommand{\coNP}{\mathrm{coNP}}
\newcommand{\PSPACE}{\mathrm{PSPACE}}
\newcommand{\PMC}{\mathrm{PMC}}
\newcommand{\MC}{\mathrm{MC}}
\newcommand{\LPMC}{\logic\mbox{-}\mathrm{PMC}}
\newcommand{\LMC}{\logic\mbox{-}\mathrm{MC}}
\newcommand{\ea}{\ensuremath{\{\exists,\land\}}}
\newcommand{\eaeq}{\ensuremath{\{\exists,\land, =\}}}
\newcommand{\eao}{\ensuremath{\{\exists,\land,\lor\}}}
\newcommand{\eaoeq}{\ensuremath{\{\exists,\land,\lor, =\}}}
\newcommand{\efa}{\ensuremath{\{\exists,\forall,\land\}}}
\newcommand{\efao}{\ensuremath{\{\exists,\forall,\land,\lor\}}}  \newcommand{\efaoeq}{\ensuremath{\{\exists,\forall,\land,\lor,=\}}}  \newcommand{\efaoneq}{\ensuremath{\{\exists,\forall,\land,\lor,\neq\}}}  \newcommand{\efaoneg}{\ensuremath{\{\exists,\forall,\land,\lor,\neg\}}}  
\newcommand{\efaeq}{\ensuremath{\{\exists,\forall,\land,=\}}}
\newcommand{\MultHom}{\mathrm{MuHom}}
\newcommand{\SMultHom}{\mathrm{SMuHom}}
\newcommand{\smulthom}{\mathrm{smuhom}}
\newcommand{\ashop}{\rotatebox[origin=c]{180}{$\forall$}\text{-}\smulthom}
\newcommand{\eshop}{\rotatebox[origin=c]{180}{$\exists$}\text{-}\smulthom}
\newcommand{\aeshop}{\rotatebox[origin=c]{180}{$\exists\forall$}\mbox{-}\smulthom}
\newcommand{\CSP}{\mathrm{CSP}}
\newcommand{\nae}{\textnormal{NAE}}
\newcommand{\rb}{\textnormal{Rb}}
\begin{document}

\maketitle

\begin{abstract}
    The fixed-template constraint satisfaction problem (CSP) can be seen as the problem of deciding whether a given  primitive positive first-order sentence is true in a fixed structure (also called model).
    We study a class of problems that generalizes the CSP simultaneously in two directions: we fix a set $\mathcal{L}$ of quantifiers and Boolean connectives, and  we specify two versions of each constraint, one strong and one weak. Given a sentence which only uses symbols from $\mathcal{L}$, the task is to distinguish whether the sentence is true in the strong sense, or it is false even in the weak sense.

We classify the computational complexity of these problems for the existential positive equality-free fragment of first-order logic, i.e., $\mathcal{L} = \{\exists,\land,\lor\}$, and we prove some upper and lower bounds for the positive equality-free fragment, $\mathcal{L} = \{\exists,\forall,\land,\lor\}$. The partial results are sufficient, e.g., for all extensions of the latter fragment. 
\end{abstract}

\section{Introduction}

The fixed-template finite-domain constraint satisfaction problem (CSP) is a framework for expressing many computational problems such as various versions of logical satisfiability, graph coloring, and systems of equations, see the survey~\cite{Barto2017polymorphisms}. A convenient formalization, that we adopt in this paper, is as follows: a \emph{template} is a relational structure $\rel A$, and the  \emph{CSP over $\rel A$} is the problem of deciding whether a given $\{\exists,\land\}$-sentence is true in $\rel A$. Here, an \emph{$\{\exists,\land\}$-sentence} is a sentence of first-order logic that uses only the relation symbols of $\rel A$, the logical connective $\land$, and the quantifier $\exists$. To see that this formalization indeed expresses \emph{constraint} satisfaction problems, consider, e.g., the sentence $\exists x \exists y \exists z \ R(x,y) \wedge S(y,z)$: this sentence is true in a structure $\rel A$ if the variables $x,y,z$ can be evaluated so that both atomic formulas (constraints) are satisfied in $\rel A$. 

Motivated by recent developments in the area, we study an extension of this framework in two simultaneous directions. One direction, discussed in Subsection~\ref{subsec:intro-mc},  is to enable other choices of permitted quantifiers and connectives. Another direction, discussed in Subsection~\ref{subsec:intro-pmc}, is to consider two versions of each relation, strong and weak (a so-called promise problem). Our contributions are then described in Subsection~\ref{subsec:intro-contrib}.

\subsection{Model checking problem parametrized by the model} \label{subsec:intro-mc}

The model checking problem~\cite{MadelaineTetrachotomyJournal} takes as input a structure $\rel A$ (often called a model) and a sentence $\phi$ in a specified logic and asks whether $\rel A \vDash \phi$, i.e., whether $\rel A$ satisfies $\phi$. We study the situation where $\rel A$ is a fixed finite relational structure, so the input is simply $\phi$, and the logic is a fragment of the first-order logic obtained by restricting the allowed quantifiers to a subset $\logic$ of $\connectives$. Thus, for each $\rel A$ and each of the $2^7$ choices for $\logic$, we obtain a computational problem, which we call the \emph{$\logic$-Model Checking Problem over $\rel A$} and denote  $\LMC(\rel A)$. 

The computational complexity classification of $\{\exists,\wedge\}$-$\MC(\rel A)$, i.e., $\CSP$ over $\rel A$, has been a very active research program in the last 20 years, which culminated in the celebrated dichotomy theorem obtained independently in \cite{Bulatov2017} and \cite{Zhu20}: each $\CSP$ over $\rel A$ is  in $\Ptime$ (solvable in polynomial time) or is $\NP$-complete. 
For the case $\logic = \{\exists, \forall, \land\}$, $\LMC(\rel A)$ is the so called \emph{quantified CSP}, another well-studied class of problems, see the survey~\cite{QCSPsurvey}. It was widely believed that this class exhibits a $\Ptime$/$\NP$-complete/$\PSPACE$-complete trichotomy~\cite{Chen12meditations}. A recent breakthrough~\cite{ZhukM20} shows that at least three more complexity classes appear within quantified CSPs, and ongoing work suggests that even 6 is not the final number. In any case, the full complexity classification of $\{\exists,\forall,\land\}$-$\MC(\rel A)$ is a challenging open problem. 

The remaining $2^7-2$ choices for $\logic$ do not need to be considered separately. For instance, $\{\exists,\land,=\}$-$\MC(\rel A)$ is no harder than $\{\exists,\land\}$-$\MC(\rel A)$ because equalities can be propagated out in this case, and $\{\forall,\lor\}$-$\MC(\rel A)$ is dual to $\{\exists,\land\}$-$\MC(\rel A)$ so we get a $\Ptime$/$\coNP$-complete dichotomy for free, etc. Moreover, some choices of $\logic$, such as $\logic = \{\exists, \lor\}$, lead to very simple problems. It turns out~\cite{Martin2008modelchecking} (see Subsection~\ref{subsec:interesting-fragments}) that, in addition to $\logic=\{\exists,\land\}$ and $\logic=\efa$, only two more fragments need to be considered in order to fully understand the complexity of $\LMC(\rel A)$, namely $\logic=\eao$ and $\logic = \efao$.

The former fragment was addressed in~\cite{Martin2008modelchecking}: except for a simple case solvable in polynomial time (in fact, $\Lspace$, the logarithmic space), all the remaining problems are $\NP$-complete. The latter fragment turned out to be more challenging but, after a series of partial results~\cite{Martin2008modelchecking,madelaine2012complexity,MartinMartin4element} (see also~\cite{MartinLattice,Carvalho2021}), the full complexity classification was given in~\cite{Madelaine2011tetrachotomy,MadelaineTetrachotomyJournal}: each problem in this class is  in $\Ptime$ (even $\Lspace$), or is $\NP$-complete, $\coNP$-complete, or $\PSPACE$-complete. These results are summarized in Figure~\ref{fig:mc}.

\begin{figure}[h] 
\begin{center}
\begin{tabular}{ |c|c| } 
 \hline
 $\LMC(\rel A)$  & Complexity\\ 
 \hline \hline
 $\{\exists,\land\}$-$\MC(\rel A)$ (CSP) & dichotomy: $\Ptime$ or $\NP$-complete \\ 
 \hline
 $\{\exists,\forall, \land\}$-$\MC(\rel A)$ (QCSP)& $\geq 6$ classes\\
 \hline
 $\{\exists,\land,\lor\}$-$\MC(\rel A)$ & dichotomy: $\Lspace$ or $\NP$-complete\\
 \hline
 $\{\forall,\exists,\land,\lor\}$-$\MC(\rel A)$ & tetrachotomy: $\Lspace$, $\NP$-complete,
  $\coNP$-complete, $\PSPACE$-complete\\
  \hline
\end{tabular}
\caption{\label{fig:mc} Known complexity results for $\logic$-$\MC(\rel A)$.}
\end{center}
\end{figure}

\subsection{Promise model checking problem} \label{subsec:intro-pmc}

The Promise CSP is a recently introduced extension of the CSP framework motivated by open problems in (in)approximability of satisfiability and coloring problems~\cite{AGH17,BG18,Barto2021algebraicPCSP}. The template consists of two structures $\rel A$ and $\rel B$ of the same signature, where $\rel A$ specifies a strong form of each relation and $\rel B$ its weak form. The Promise CSP over $(\rel A,\rel B)$ is then the problem of distinguishing $\ea$-sentences that are true in $\rel A$ from those that are not true in $\rel B$.

For example, by choosing an appropriate template, we obtain the problem of distinguishing $k$-colorable graphs from those that are not even $l$-colorable (where $k\leq l$ are fixed), a problem whose complexity is notoriously open.  

The generalization of Promise CSP over $\AB$ to an arbitrary choice $\logic \subseteq \connectives$ is referred to as the \emph{$\logic$-Promise Model Checking Problem over $\AB$} and is denoted $\LPMC \AB$. Similarly as in the special case $\rel A=\rel B$, which is exactly $\LMC (\rel A)$, it is sufficient to consider only four fragments.
A full complexity classification for $\ea$-$\PMC$ (i.e., Promise CSP) is much desired but widely open, and $\efa$-$\PMC$ is likely even harder. This work concentrates on the remaining two classes of problems, $\eao$-$\PMC$ and $\efao$-$\PMC$. 

Our motivation was that these cases might be substantially simpler, as indicated by the non-promise special case, and at the same time, the investigation could uncover interesting intermediate problems towards the grand endeavor of understanding the sources of tractability and hardness in computation. We believe that our findings confirm this hope.

\begin{example}

Consider structures $\rel A$ and $\rel B$ with a single relation symbol $=$ interpreted as the equality on a three-element domain in $\rel A$ and as the equality on a two-element domain in $\rel B$. For $\logic=\efao$, both $\logic$-$\MC(\rel A)$ and $\logic$-$\MC(\rel B)$ are $\PSPACE$-complete problems, see~~\cite{Martin2008modelchecking}.

It is not hard to see that every $\logic$-sentence true in $\rel A$ is also true in $\rel B$. In this sense, the relation in $\rel A$ is stronger than the relation in $\rel B$. On the other hand, there are $\logic$-sentences true in $\rel B$ that are not true in $\rel A$, e.g., $\phi =
\forall x \exists y \forall z \ (z=x) \lor (z=y)$.
Therefore, $\logic$-$\PMC \AB$ could potentially be easier than the above non-promise problems -- instances such as $\phi$ need not be considered (there is no requirement on the algorithm for such inputs). Nevertheless, the problem remains $\PSPACE$-complete, as shown in Proposition~\ref{prop:hardness_of_equality}.
\end{example}

\subsection{Contributions} \label{subsec:intro-contrib}

Theorem~\ref{thm:eao_definability} and Theorem~\ref{thm:efao_definability} provide basics for an algebraic approach to $\eao$-$\PMC$ and $\efao$-$\PMC$ by characterizing definability in terms of compatible functions: multi-homomorphisms for the $\eao$ fragment and surjective multi-homomorphisms (\emph{smuhoms}) for $\efao$. The proofs can be obtained as relatively straightforward generalizations of the proofs for $\MC$ in~\cite{MadelaineTetrachotomyJournal}; however, we believe that our approach is somewhat more transparent. In particular, it allows us to easily characterize meaningful templates for these problems (Propositions~\ref{prop:existential_templates} and \ref{prop:positive_templates}).

For $\eao$-$\PMC$, we obtain an $\Lspace$/$\NP$-complete dichotomy in Theorem~\ref{thm:eao_classification}. It turns out that, apart from some simple cases, the problem is $\NP$-complete. Interestingly, there is a ``single reason'' for hardness: the NP-hardness of coloring a rainbow colorable hypergraph from~\cite{Guruswami2018}.

For $\efao$-$\PMC$, our complexity results  are only partial, leaving two gaps for further investigation. The results are sufficient for full complexity classification of $\LPMC \AB$ in the case that $\logic=\efao$ and one of the structures $\rel A$, $\rel B$ has a two-element domain, and also in the case that $\logic \supsetneq \efao$. We also give some examples where our efforts have failed so far. One such example is a particularly interesting $\efao$-$\PMC$ over 3-element domains:  given a $\efao$-sentence $\phi$ whose atomic formulas are all of the form $R^i(x)$, $i \in \{1,2,3\}$, distinguish between the case where $\phi$ is true when $R^i(x)$ is interpreted as ``$x = i$'', and the case where $\phi$ is false when $R^i(x)$ is interpreted as ``$x \neq i$''.

Our complexity results are summarized in Figure~\ref{fig:result}, the conditions for $\logic=\efao$ are stated in terms of special surjective multi-homomorphisms of the template, introduced in Subsection~\ref{subsec:positive-membership}.

\begin{figure}[h] 
\begin{center}
\begin{tabular}{ |c|c|c| } 
\hline
 \textbf{$\LPMC\AB$} & \textbf{Condition} & \textbf{Complexity} \\ 
 \hline\hline
 $\efa$-$\PMC \AB$ & & $\Lspace$/$\NP$-complete \\
 \hline
 &  \makecell{ $\aeshop$, or
     $\ashop$ and \\ $\eshop$ and $\rel A, \rel B$ digraphs} &
     $\Lspace$ \\
 \cline{2-3}
$\efao$-$\PMC \AB$ & $\ashop$ and $\eshop$ & $\NP \cap \coNP$\\
 \cline{2-3}
  & $\ashop$, no $\eshop$ & $\NP$-complete \\ 
 \cline{2-3}
 & $\eshop$, no $\ashop$ & $\coNP$-complete\\
\cline{2-3}
 & no $\ashop$ and no $\eshop$ & $\NP$-hard and $\coNP$-hard\\
 \hline
\makecell{$\efaoeq$-$\PMC \AB$,\\ $\efaoneq$-$\PMC \AB$,\\ $\efaoneg$-$\PMC\AB$}& 
& $\Lspace$/$\PSPACE$-complete\\
\hline
\end{tabular}
\caption{\label{fig:result} Complexity results for $\LPMC\AB$.}
\end{center}
\end{figure}

\section{Preliminaries}

\paragr{Structures.}
We use a standard model-theoretic terminology, but restrict the generality of some concepts for the purposes of this paper. 
A \emph{relation} of arity $n \geq 1$ on a set $A$ is a set of $n$-tuples of elements of $A$, i.e., a subset of $A^n$. 
The \emph{complement} of a relation $S$ is denoted $\compl{S} := A^n \setminus S$. The equality relation on $A$ is denoted $=_A$ and the disequality relation $\neq_A$. Components of a tuple $\tuple{a}$ are referred to as $a_1$, $a_2$, \dots, i.e., $\tuple{a} = (a_1, \dots, a_n)$.

A \emph{signature} is a nonempty collection of relation symbols each with an associated arity, denoted $\ar(R)$ for a relation symbol $R$. A \emph{relational structure} (also called a \emph{model}) $\rel A$ in the signature $\sigma$, or simply a \emph{structure}, consists of a finite set $A$ of size at least two, called the \emph{universe} of $\rel A$, and a nonempty proper relation $\emptyset \subsetneq R^{\rel A} \subsetneq A^{\ar(R)}$
for each symbol $R$ in $\sigma$, called the \emph{interpretation} of $R$ in $\rel A$. Two structures are called \emph{similar} if they are in the same signature. 
The \emph{complement} of a relational structure $\rel A$ is obtained by taking complements of all relations in the structure and is denoted $\compl{\rel A}$. 
A structure over a signature containing a single binary relation symbol is called a \emph{digraph}.

We emphasize that the universe of a structure is denoted by the same letter as the structure, that the universe of every structure in this paper is assumed to be finite and at least two-element, and that each relation in a structure is assumed to be at least unary, nonempty and proper. These nonstandard requirements are placed for technical convenience and do not significantly decrease the generality of our results.

Given two similar structures $\rel A$ and $\rel B$, a  function $f$ from $A$ to $B$ is called a \emph{homomorphism} from $\rel A$ to $\rel B$ if $f(\tuple{a}) \in R^{\rel B}$ for any $\tuple{a} \in R^{\rel A}$, where $f(\tuple{a})$ is computed component-wise. We only work with total functions, that is, $f(a)$ is defined for every $a \in A$. 

\smallskip

\paragr{Multi-homomorphisms.}
A \emph{multi-valued function} $f$ from $A$ to $B$ is a mapping from $A$ to $\powerset B$, the set of all nonempty subsets of $B$. It is called \emph{surjective} if for every $b \in B$, there exists $a \in A$ such that $b \in f(a)$. The \emph{inverse} of a surjective multi-valued function $f$ from $A$ to $B$ is the multi-valued function from $B$ to $A$ defined by $f^{-1}(b) = \{a: b \in f(a)\}$. 
For a tuple $\tuple{a} \in A^n$ we write $f(\tuple{a})$ for $f(a_1) \times \dots \times f(a_n)$.
The value $\max \{|f(a)|: a \in A\}$ is referred to as the \emph{multiplicity} of $f$; in particular, multi-valued functions of multiplicity one are essentially functions. 
For two multi-valued functions $f$ and $f'$ from $A$ to $B$, we say that $f'$ is \emph{contained in} $f$ if $f'(a) \subseteq f(a)$ for each $a \in A$.

Given two similar structures $\rel A$ and $\rel B$, a multi-valued function $f$ from $A$ to $B$ is called a \emph{multi-homomorphism}\footnote{
We deviate here from the terminology of~\cite{Madelaine2011tetrachotomy, madelaine2012complexity} because it would not work well in the promise setting.
} from $\rel A$ to $\rel B$ if for any $R$ in the signature and any $\tuple{a} \in R^{\rel A}$, we have $f(\tuple{a}) \subseteq R^{\rel B}$, i.e., $\tuple{b} \in R^{\rel B}$ whenever $b_i \in f(a_i)$ for each $i \in [\ar(R)] = \{1,2, \dots, \ar(R)\}$. Notice that if $f$ is a multi-homomorphism from $\rel A$ to $\rel B$, then so is any multi-valued function contained in $f$. In particular, if $f$ is a multi-homomorphism from $\rel A$ to $\rel B$, then any function $g: A \to B$ with $g(a) \in f(a)$ for each $a \in A$ is a homomorphism from $\rel A$ to $\rel B$. The converse does not hold in general, as witnessed by structures $\rel A = \rel B$ with a single binary equality relation and any multi-valued function of multiplicity greater than one.

The set of all multi-homomorphisms from $\rel A$ to $\rel B$ is denoted by
$\MultHom \AB$ and the set of all surjective multi-homomorphisms by
$\SMultHom \AB$.

\smallskip

\paragr{Fragments of first-order logic.} Let $\logic \subseteq \connectives$ and fix some signature. By an \emph{$\logic$-sentence} (resp., \emph{$\logic$-formula}) we mean a sentence (resp., formula) of first-order logic that only uses variables (denoted $x_i$, $y_i$, $z_i$), relation symbols in the signature, and connectives and quantifiers in $\logic$. We refer to this fragment of first-order logic  as the \emph{$\logic$-logic}.

The \emph{prenex normal form} of an $\logic$-formula is an equivalent formula that begins with quantified variables followed by a quantifier-free formula. The prenex normal form can be computed in logarithmic space and it is an $\logic$-formula whenever $\logic$ does not contain the negation.

For a structure $\rel A$ in the signature and an $\logic$-sentence $\phi$, we write $\rel A \vDash \phi$ if $\phi$ is satisfied in $\rel A$. 
More generally, given an $\logic$-formula $\psi$, a tuple of distinct variables $(v_1, \dots, v_n)$ which contains every free variable of $\psi$ and a tuple $(a_1, \dots, a_n) \in A^n$, we write $\rel A \vDash \psi(a_1, \dots, a_n)$  if $\psi$ is satisfied when $v_1, \dots, v_n$ are evaluated as $\eval_{A}(v_1)=a_1, \dots, \eval_{A}(v_n)=a_n$, respectively. Notice that variables $v_1, \dots, v_n$ indeed need to be pairwise distinct, otherwise this notation would not make sense. The tuple $(v_1, \dots, v_n)$ is often specified by writing $\psi = \psi(v_1, \dots, v_n)$. 

We say that a relation $S \subseteq A^n$ is \emph{$\logic$-definable} from $\rel A$ if there exists an $\logic$-formula $\psi(v_1, \dots, v_n)$
such that, for all $ (a_1, \dots, a_n) \in A^n$, we have $(a_1, \dots, a_n) \in S$ if and only if $\rel A \vDash \psi(a_1, \dots, a_n)$. In this case,
we also say that $\psi(v_1, \dots, v_n)$ defines $S$ in $\rel A$.

\section{Promise model checking} 

In this section we define the promise model checking problem restricted to $\logic \subseteq \connectives$. We start by briefly discussing the non-promise setting.

\subsection{Model checking problem} 
Let $\logic \subseteq \connectives$ and $\rel A$ be a structure in a signature $\sigma$. Recall that the \emph{$\logic$-Model Checking
Problem over $\rel A$}, denoted $\LMC(\rel A)$, is the problem of deciding whether a given $\logic$-sentence $\phi$ (in the same signature as $\rel A$) is true in $\rel A$.

A simple but important observation sometimes allows us to compare the complexity of the $\LMC$ problems over two templates $\rel A$ and $\rel C$ with the same universe $A=C$ but possibly different signatures: If every relation in $\rel C$ is $\logic$-definable from $\rel A$, then $\LMC(\rel C)$ can be reduced in polynomial-time (even logarithmic space) to $\LMC(\rel A)$. Indeed, the reduction amounts to replacing atomic formulas of the form $R(\tuple{v})$ by their definitions.

The starting point of the algebraic approach to $\LMC$ is to find a characterization of definability in terms of certain ``compatible functions'' or ``symmetries'' (so called polymorphisms for $\logic=\eaeq$~\cite{Barto2017polymorphisms}, surjective polymorphisms for $\logic = \efaeq$~\cite{QCSPsurvey}, 
multi-endomorphisms for $\logic=\eao$,
surjective multi-endomorphisms for $\logic = \efao$ ~\cite{MadelaineTetrachotomyJournal}; see also~\cite{Borner2008}). Because such characterizations are central in this paper as well, we now explain the basic idea for a simple case.

For $\logic = \eaoeq$, the appropriate type of compatible function is endomorphism: a nonempty relation $S \subseteq A^n$ is $\logic$-definable from $\rel A$ if and only if  it is invariant under every endomorphism of $\rel A$ (i.e., a homomorphism from $\rel A$ to itself). The forward direction is well-known and easy to verify. For the backward direction, assume $A = [k] := \{1, \dots, k\}$ and consider the following formula.
\begin{equation} \label{eq:endo}
  \phi(x_1, \dots, x_k) :=    \bigwedge_{R \in \sigma} \bigwedge_{\tuple{r} \in R^{\rel A}} R(x_{r_1}, \dots, x_{r_{\ar(R)}})
\end{equation}
It follows immediately from definitions that, for any structure $\rel E$ in the signature of $\rel A$, $\rel{E} \vDash \phi(e_1, \dots, e_k)$ if and only if the mapping defined by $i \mapsto e_i$ for each $i \in [k]$ is a homomorphism from $\rel A$ to $\rel E$. This in particular holds for $\rel E = \rel A$. By existential quantification we can then obtain an $\logic$-formula defining the closure of any tuple $\tuple{a} \in A^n$ with distinct entries under endomorphisms of $\rel A$; e.g., $\psi(x_1,x_3,x_2) := (\exists x_4) (\exists x_5) \dots (\exists x_k) \phi$ defines the closure of $(1,3,2)$ under endomorphisms. Using $=$ we can also define closures of the remaining tuples with repeated entries. Finally, $S$ is the union of closures of its members (since it is closed under endomorphisms of $\rel A$), so $S$ can be defined by a disjunction of  formulas that we have already found (after appropriately renaming variables). 

Notice that this construction would not work without the equality in $\logic$ because of tuples with repeated entries. This is the reason why we need to work with multi-valued functions for the equality-free logics that we deal with in this paper. 

\subsection{Promise model checking problem}
Let $\logic \subseteq \connectives$.
The $\logic$-Promise Model Checking
Problem over a pair of similar structures $\AB$ is the problem of distinguishing $\logic$-sentences $\phi$ that are true in $\rel A$ from those that are not true in $\rel B$. This problem makes sense only if every $\logic$-sentence that is true in $\rel A$ is also true in $\rel B$; we call such pairs $\LPMC$ templates.

\begin{definition}
A pair of similar structures $\AB$ is called an \emph{$\LPMC$ template} if $\rel A \vDash \phi$ implies $\rel B \vDash \phi$  for every $\logic$-sentence $\phi$ in the signature of $\rel A$ and $\rel B$.

Given an $\LPMC$ template $\AB$, the \emph{$\logic$-Promise Model Checking Problem over $\AB$}, denoted $\LPMC \AB$, is the following problem.  

Input: an $\logic$-sentence $\phi$ in the signature of $\rel A$ and $\rel B$;

Output: $\yes$ if $\rel A \vDash \phi$; $\no$ if $\rel B \not\vDash \phi$.
\end{definition}

The definition of a template guarantees that the sets of $\yes$-instances and $\no$-instances are disjoint. However, their union need not be the whole set of $\logic$-sentences; an algorithm for $\LPMC$ is only required to produce correct outputs for $\yes$-instances and $\no$-instances. Alternatively, we are \emph{promised} that the input sentence is a $\yes$-instance or a $\no$-instance. The complexity-theoretic notions (such as membership in $\NP$, $\NP$-completeness, reductions) can be  adjusted naturally for the promise setting. 
We write $\LPMC\CD \leq \LPMC \AB$ if the former problem can be reduced to the latter problem by a logarithmic space reduction, that is, a logarithmic space transformation that maps each $\yes$-instance $\phi$ of $\LPMC \CD$ to a $\yes$-instance $\psi$ of $\LPMC \AB$ (equivalently, $\rel C \vDash \phi$ must imply $\rel A \vDash \psi$) and $\no$-instances to $\no$-instances (equivalently, $\rel B \vDash \psi$ must imply $\rel D \vDash \phi$). 

An appropriate adjustment of definability for the promise setting is as follows. Note that we do not allow the negation in $\logic$, otherwise the concept would need to be defined differently because of the inclusions in the definition.

\begin{definition} \label{def:p-definability}
Assume $\neg \not\in \logic$ and 
let $\AB$ be a pair of similar structures. We say that a pair of relations $(S,T)$, where $S \subseteq A^n$ and $T \subseteq B^n$, is \emph{promise-$\logic$-definable} (or \emph{p-$\logic$-definable}) from $\AB$ if there exist relations $S'$ and $T'$ and an $\logic$-formula $\psi(v_1, \dots, v_n)$ such that $S \subseteq S'$, $T' \subseteq T$,
$\psi(v_1, \dots, v_n)$ defines $S'$ in $\rel A$, and
$\psi(v_1, \dots, v_n)$ defines $T'$ in $\rel B$.

We say that an $\LPMC$ template  $\CD$ is p-$\logic$-definable from $\AB$ (the signatures can differ) if $(Q^{\rel C},Q^{\rel D})$ is p-$\logic$-definable from $\AB$ for each relation symbol $Q$ in the signature of $\rel C$ and $\rel D$. 
\end{definition}

\begin{theorem} \label{thm:reduction}
Assume $\neg \not\in \logic$. If $\AB$ and $\CD$ are $\LPMC$ templates such that $\CD$ is p-$\logic$-definable from $\AB$, then $\LPMC \CD \leq \LPMC \AB$.
\end{theorem}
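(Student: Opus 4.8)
The plan is to mimic the classical reduction for (non-promise) model checking problems via definability, which works by syntactic substitution, and to verify that it respects the two-sided structure of the promise setting. First I would fix the signature $\tau$ of $\rel C$ and $\rel D$, and for each relation symbol $Q$ in $\tau$ use the hypothesis that $(Q^{\rel C}, Q^{\rel D})$ is p-$\logic$-definable from $\AB$: this gives an $\logic$-formula $\psi_Q(v_1, \dots, v_{\ar(Q)})$ in the signature of $\rel A$ and $\rel B$, together with relations $S'_Q \supseteq Q^{\rel C}$ and $T'_Q \subseteq Q^{\rel D}$ such that $\psi_Q$ defines $S'_Q$ in $\rel A$ and $T'_Q$ in $\rel B$. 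These finitely many formulas are hard-wired into the reduction (they depend only on the templates, not on the input), so they cost only constant space.

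Next I would describe the transformation on an input $\logic$-sentence $\phi$ of $\LPMC\CD$. Working on the syntax tree of $\phi$ (or, if one prefers, first passing to prenex normal form, which is available in logarithmic space since $\neg \notin \logic$), replace every atomic subformula $Q(w_1, \dots, w_{\ar(Q)})$ by a copy of $\psi_Q$ in which the free variables $v_1, \dots, v_{\ar(Q)}$ are substituted by $w_1, \dots, w_{\ar(Q)}$, after renaming the quantified (bound) variables of $\psi_Q$ to fresh variables so as to avoid clashes with the variables of $\phi$ and with each other. The result $\psi$ is again an $\logic$-sentence, because $\psi_Q$ uses only connectives and quantifiers from $\logic$ and the substitution introduces no new ones; and the whole procedure is a local, bounded-lookahead rewriting of the input, hence computable in logarithmic space (one needs only a counter to generate fresh variable names, which fits in logarithmic space).

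Then I would prove correctness, i.e.\ that $\psi$ is a valid output: $\rel C \vDash \phi$ implies $\rel A \vDash \psi$, and $\rel B \vDash \psi$ implies $\rel D \vDash \phi$. Both implications follow from a single substitution lemma, proved by induction on the structure of $\phi$: for the structure $\rel A$, replacing $Q$ by its definition $S'_Q$ yields $\rel A \vDash \psi$ iff $\rel{A}' \vDash \phi$, where $\rel{A}'$ is the structure on universe $A$ with $Q^{\rel{A}'} := S'_Q$; similarly $\rel B \vDash \psi$ iff $\rel{B}' \vDash \phi$ with $Q^{\rel{B}'} := T'_Q$. Now for the first implication: since $\phi$ is positive (no negation, in fact no equality either, but positivity is what matters) and $Q^{\rel C} \subseteq S'_Q = Q^{\rel{A}'}$ for every $Q$, monotonicity of $\logic$-sentences under enlarging relations gives $\rel C \vDash \phi \Rightarrow \rel{A}' \vDash \phi \Rightarrow \rel A \vDash \psi$. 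For the second: since $T'_Q = Q^{\rel{B}'} \subseteq Q^{\rel D}$ for every $Q$, the same monotonicity gives $\rel B \vDash \psi \Rightarrow \rel{B}' \vDash \phi \Rightarrow \rel D \vDash \phi$. (One should note the side condition that $\AB$ being an $\LPMC$ template guarantees $\psi$ has disjoint classes of yes/no behaviour, but this is already part of the definition of a reduction between $\LPMC$ problems and needs no separate argument.)

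I expect the only genuinely delicate point to be the monotonicity/substitution lemma in the presence of quantifiers: one must check that $\logic$-sentences, containing only $\exists, \forall, \land, \lor$ (and the substituted $\logic$-formulas likewise), are preserved under enlarging the interpretations of relation symbols — which is true precisely because there is no negation and atomic formulas occur only positively — and that the variable renaming in the substitution step is carried out carefully enough that no free variable of a $\psi_Q$-instance is accidentally captured by a quantifier of $\phi$ or of another $\psi_Q$-instance. Both are routine but are the steps where an incorrect argument would go wrong, so I would state the substitution lemma explicitly and prove it by a short structural induction, the quantifier cases using that $S'_Q$ and $T'_Q$ do not depend on the particular evaluation of the variables of $\phi$.
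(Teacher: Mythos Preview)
Your proposal is correct and follows essentially the same approach as the paper: replace each atomic $Q(\tuple{v})$ by its defining $\logic$-formula $\psi_Q$, and justify correctness via the monotonicity of negation-free formulas under enlarging the interpretations of the (signature) relation symbols. The paper's proof is a two-line sketch of exactly this; your version simply spells out the substitution lemma and the variable-renaming hygiene in more detail.
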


\begin{proof}
   The reduction is to replace each atomic $Q(\tuple{v})$ by the corresponding formula $\psi$ from Definition \ref{def:p-definability}. 
   For correctness of this reduction, observe that an $\logic$-sentence which is true in a structure $\rel E$ remains true when we add tuples to the relations of $\rel E$ (since $\logic$ does not contain $\neg$). 
\end{proof}

\subsection{Interesting fragments} \label{subsec:interesting-fragments}

We now explain why only four fragments of first-order logic need to be considered in order to fully understand the problems $\LPMC \AB$.
Observe first that if $\logic$ does not contain any connective ($\land,\lor$), or $\logic$ does not contain any quantifier ($\exists, \forall$), or $\logic \subseteq \{\exists, \lor\}$, then each $\LPMC$ is in $\Lspace$, the logarithmic space. (In some of these cases we do not even have any valid inputs in our definition of structures.) 

Secondly, notice that $(\logic \cup \{=\})$-$\PMC \AB$  is essentially the same as $\LPMC (\rel A',\rel B')$, where $\rel A'$ and $\rel B'$ are obtained from the original structures by adding a fresh binary symbol $Q$ to the signature and setting $Q^{\rel A'}$ to $=_A$ and $Q^{\rel B'}$ to $=_B$. The disequality is dealt with analogously, thus we can and shall restrict to fragments with $\logic \subseteq \efaoneg$.

Next, we deal with the negation. If $\neg$ is in $\logic$, and $\logic$ contains a quantifier and a connective, then it is enough to consider the case $\logic = \efaoneg$ since the remaining quantifier and connective can be expressed using negation. Moreover, the complements of relations can also be expressed, so we may assume that each template $\AB$ is \emph{closed under complementation}, meaning that for every symbol $R$ in the signature, we have a symbol $\compl{R}$ interpreted as $\compl{R}^{\rel A} = \compl{R^{\rel A}}$, $\compl{R}^{\rel B} = \compl{R^{\rel B}}$. But then $\neg$ is no longer necessary since we can propagate the negations inwards in an input sentence. We are down to $\logic \subseteq \efao$.

Finally, note that $\rel E \vDash \neg\phi$, where $\phi$ is an $\logic$-sentence, is equivalent to $\compl{\rel E} \vDash \phi'$ where $\phi'$ is an $\logic'$-sentence and $\logic'$ is obtained from $\logic$ by swapping $\forall \leftrightarrow \exists$ and $\lor \leftrightarrow \land$ ($\phi'$ can be, again, computed from $\neg \phi$ by inward propagation). It follows that $\phi \mapsto \phi'$ transforms every $\yes$-instance (resp., $\no$-instance) of $\LPMC \AB$ to a $\no$-instance (resp., $\yes$-instance) of $\logic'$-$\PMC (\compl{\rel B},\compl{\rel A})$, and a similar ``dual'' reduction works in the opposite direction. Therefore, the latter $\PMC$ has the ``dual'' complexity to the former $\PMC$, e.g., if the former is $\NP$-complete, then the latter is $\coNP$-complete; and if the former is $\PSPACE$-complete, then the latter is $\PSPACE$-complete as well. We will refer to this reasoning as the \emph{duality argument}.

Eliminating one of the logic fragments from each of the ``dual'' pairs, we are left with only four fragments:  $\logic = \ea$ (whose $\logic$-PMC is Promise CSP), $\logic = \efa$ (Promise Quantified CSP), $\logic = \eao$, and $\logic = \efao$. We investigate the last two separately in the next two sections.

\section{Existential positive fragment}

This section concerns the existential positive equality-free logic, that is, the $\logic$-logic with $\logic = \eao$. We fix this $\logic$ for the entire section.
\subsection{Characterization of templates and p-$\logic$-definability}

We start by characterizing $\LPMC$ templates. One direction of the characterization follows from the discussion below (\ref{eq:endo}), the other one from the following observation.

\begin{lemma} \label{lem:homs-formulas}
Let $f$ be a multi-homomorphism from $\rel A$ to $\rel B$, let $\phi(x_1, \dots, x_n)$ be a quantifier-free $\logic$-formula in the same signature, and let $\tuple{a} \in A^n$, $\tuple{b} \in B^n$. If $\rel A \vDash \phi(\tuple{a})$ and $\tuple{b} \in f(\tuple{a})$, then $\rel B \vDash \phi(\tuple{b})$. 
\end{lemma}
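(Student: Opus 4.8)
The plan is to prove this by structural induction on the quantifier-free $\logic$-formula $\phi$, where $\logic = \eao = \{\exists, \land, \lor\}$; since $\phi$ is quantifier-free, this reduces to an induction over the connectives $\land$ and $\lor$ together with the base case of atomic formulas.

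\textbf{Base case.} Suppose $\phi(x_1, \dots, x_n)$ is atomic, i.e., of the form $R(x_{i_1}, \dots, x_{i_m})$ for some relation symbol $R$ of arity $m$ in the signature and some indices $i_1, \dots, i_m \in [n]$. If $\rel A \vDash \phi(\tuple a)$, then by definition $(a_{i_1}, \dots, a_{i_m}) \in R^{\rel A}$. Since $f$ is a multi-homomorphism from $\rel A$ to $\rel B$, we have $f(a_{i_1}) \times \dots \times f(a_{i_m}) \subseteq R^{\rel B}$. Now $\tuple b \in f(\tuple a)$ means $b_j \in f(a_j)$ for every $j \in [n]$, so in particular $b_{i_j} \in f(a_{i_j})$ for each $j \in [m]$, hence $(b_{i_1}, \dots, b_{i_m}) \in R^{\rel B}$, i.e., $\rel B \vDash \phi(\tuple b)$. (Note the hypothesis that formulas are built without $=$ and $\neq$ is exactly what makes the atomic case go through: an atomic formula $x_i = x_j$ would fail, since $f$ may merge or split elements.)

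\textbf{Inductive step.} Suppose the claim holds for $\logic$-formulas $\phi_1(x_1, \dots, x_n)$ and $\phi_2(x_1, \dots, x_n)$ (we may take the free-variable tuple to be a common superset of the free variables of both). If $\phi = \phi_1 \land \phi_2$ and $\rel A \vDash \phi(\tuple a)$, then $\rel A \vDash \phi_1(\tuple a)$ and $\rel A \vDash \phi_2(\tuple a)$; applying the induction hypothesis to each with the same $\tuple b \in f(\tuple a)$ gives $\rel B \vDash \phi_1(\tuple b)$ and $\rel B \vDash \phi_2(\tuple b)$, hence $\rel B \vDash \phi(\tuple b)$. If $\phi = \phi_1 \lor \phi_2$ and $\rel A \vDash \phi(\tuple a)$, then $\rel A \vDash \phi_k(\tuple a)$ for some $k \in \{1,2\}$; the induction hypothesis yields $\rel B \vDash \phi_k(\tuple b)$, hence $\rel B \vDash \phi(\tuple b)$. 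This completes the induction.

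There is no real obstacle here — the statement is a routine monotonicity/compatibility fact, and the only thing to be careful about is keeping the free-variable bookkeeping honest (ensuring the tuple of variables always contains all free variables, and that components are indexed consistently). The essential content is entirely in the atomic base case, where the multi-homomorphism condition $f(\tuple a) \subseteq R^{\rel B}$ for $\tuple a \in R^{\rel A}$ is used verbatim; the connective cases are immediate because $\land$ and $\lor$ commute with the quantifier-free semantics in the obvious way and the induction hypothesis is applied with the very same witness $\tuple b$.
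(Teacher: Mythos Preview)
Your proof is correct and follows exactly the same approach as the paper's own proof: structural induction on the quantifier-free formula, with the atomic case handled directly by the definition of multi-homomorphism and the $\land$, $\lor$ cases being immediate. The paper's proof is merely a two-sentence sketch of precisely this argument, so your version is simply a more detailed spelling-out of the same idea.
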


\begin{proof}
The claim holds for atomic formulas by definition of multi-homomorphisms. The proof is then finished by induction on the complexity of $\phi$; both $\lor$ and $\land$ are dealt with in a straightforward way.
\end{proof}

\begin{proposition} \label{prop:existential_templates}
A pair $\AB$ of similar structures is an $\LPMC$ template if and only if  there exists a homomorphism from $\rel A$ to $\rel B$. \end{proposition}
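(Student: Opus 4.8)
The plan is to establish both directions by relating homomorphisms to sentences, using the formula~\eqref{eq:endo} as the main tool. For the backward direction, suppose $h \colon \rel A \to \rel B$ is a homomorphism; I want to show $\AB$ is an $\LPMC$ template, i.e.\ that $\rel A \vDash \phi$ implies $\rel B \vDash \phi$ for every $\logic$-sentence $\phi$. First I would put $\phi$ into prenex normal form (possible in $\logic$ since $\neg \notin \logic$), so $\phi = Q_1 x_1 \dots Q_m x_m\, \theta$ with each $Q_i \in \{\exists\}$ (there is no $\forall$ in $\eao$) and $\theta$ quantifier-free. Given a satisfying evaluation $\eval_A$ of the existentially quantified variables witnessing $\rel A \vDash \phi$, I would push it forward through $h$: set $\eval_B(x_i) = h(\eval_A(x_i))$. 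Since $h$ is (in particular) contained in the multi-homomorphism $a \mapsto \{h(a)\}$, Lemma~\ref{lem:homs-formulas} applied to $\theta$ gives $\rel B \vDash \theta$ under $\eval_B$, and since all quantifiers are existential this witnesses $\rel B \vDash \phi$. (Alternatively one can invoke the standard fact that homomorphisms preserve existential positive sentences, but deriving it from Lemma~\ref{lem:homs-formulas} keeps things self-contained.)

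For the forward direction, suppose $\AB$ is an $\LPMC$ template; I want to produce a homomorphism $\rel A \to \rel B$. Write $A = [k]$ and consider the sentence $\phi := (\exists x_1)\dots(\exists x_k)\, \theta$ where $\theta = \theta(x_1,\dots,x_k)$ is the quantifier-free formula from~\eqref{eq:endo}, namely $\bigwedge_{R \in \sigma} \bigwedge_{\tuple r \in R^{\rel A}} R(x_{r_1}, \dots, x_{r_{\ar(R)}})$. As recorded below~\eqref{eq:endo}, for any structure $\rel E$ similar to $\rel A$ and any $e_1,\dots,e_k \in E$, $\rel E \vDash \theta(e_1,\dots,e_k)$ iff the map $i \mapsto e_i$ is a homomorphism $\rel A \to \rel E$. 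Taking $\rel E = \rel A$ and $e_i = i$ shows the identity map witnesses $\rel A \vDash \phi$. Since $\AB$ is an $\LPMC$ template and $\phi$ is an $\logic$-sentence, $\rel B \vDash \phi$, so there are $e_1,\dots,e_k \in B$ with $\rel B \vDash \theta(e_1,\dots,e_k)$; by the same equivalence (now with $\rel E = \rel B$), $i \mapsto e_i$ is a homomorphism from $\rel A$ to $\rel B$, as desired.

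The only mild subtlety is purely bookkeeping: ensuring that $\phi$ is genuinely an $\eao$-sentence (it uses only $\exists$ and $\land$, hence certainly lies in $\eao$) and that~\eqref{eq:endo} is well-formed given the standing assumptions on structures (every relation is nonempty, so each inner conjunction is over a nonempty index set, and $\sigma$ is nonempty, so $\theta$ is a nonempty conjunction; the universe has size at least two but that is irrelevant here). I do not expect any real obstacle — the statement is essentially the promise-free characterization of CSP templates, and the two directions are short once Lemma~\ref{lem:homs-formulas} and the observation about~\eqref{eq:endo} are in hand. If anything, the point requiring a sentence of care is making precise that a single-valued function contained in a multi-homomorphism is a homomorphism (already noted in the Preliminaries) so that Lemma~\ref{lem:homs-formulas} is applicable in the backward direction.
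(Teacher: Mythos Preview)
Your proposal is correct and follows essentially the same approach as the paper: both directions use Lemma~\ref{lem:homs-formulas} for the backward implication (push a satisfying assignment through the homomorphism after prenexing) and the existential closure of formula~\eqref{eq:endo} for the forward implication. The additional bookkeeping remarks you make are accurate but not needed for the argument to go through.
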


\begin{proof}
Suppose that there exists a homomorphism from $\rel A$ to $\rel B$ and $\rel A \vDash \phi$, where $\phi = \exists x_1 \exists x_2 \dots \exists x_n \phi'(x_1, \dots, x_n)$ is in prenex normal form. 
Then we have $\rel A \vDash \phi'(\tuple{a})$ for some $\tuple{a} \in A^n$, therefore $\rel B \vDash \phi'(f(\tuple{a}))$ by Lemma \ref{lem:homs-formulas}, and it follows that $\rel B \vDash \phi$.

For the forward implication, observe that the sentence obtained from the formula~(\ref{eq:endo}) by existentially quantifying all the variables is true in $\rel A$ (as there exists a homomorphism from $\rel A$ to $\rel A$ -- the identity), so it must be true in $\rel B$, giving us a homomorphism from $\rel A$ to $\rel B$. 
\end{proof} 

Note that this characterization would remain the same if we add $=$ to $\logic$ (and/or remove $\vee$).
For the following characterization of promise definability, the absence of the equality relation does make a difference, which is why we need to use multi-homomorphisms instead of homomorphisms.   

\begin{theorem} \label{thm:eao_definability}
  Let $\AB$ and $\CD$ be $\LPMC$ templates such that $A=C$ and $B=D$. Then $\CD$ is p-$\logic$-definable from $\AB$  if and only if $\MultHom \AB \subseteq \MultHom \CD$.  Moreover, in such a case, $\LPMC \CD \leq \LPMC \AB$.
\end{theorem}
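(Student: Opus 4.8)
The statement has two directions plus a ``moreover''. The ``moreover'' is the easy part: it follows immediately from Theorem~\ref{thm:reduction} once p-$\logic$-definability is established, so the real content is the equivalence $\CD$ is p-$\logic$-definable from $\AB$ iff $\MultHom\AB \subseteq \MultHom\CD$.

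\textbf{Forward direction (definability implies inclusion of multi-homomorphism sets).} Suppose $\CD$ is p-$\logic$-definable from $\AB$ and let $f \in \MultHom\AB$; I want $f \in \MultHom\CD$. Fix a relation symbol $Q$ of arity $n$ in the signature of $\CD$, a tuple $\tuple a \in Q^{\rel C} = Q^{\rel A}\text{(-domain)}$ — more precisely $\tuple a \in Q^{\rel C}$ with $Q^{\rel C}\subseteq Q^{\rel A}$-side playing no role since $A=C$ — and a tuple $\tuple b \in f(\tuple a)$; I must show $\tuple b \in Q^{\rel D}$. By Definition~\ref{def:p-definability} there are relations $S', T'$ and an $\logic$-formula $\psi(v_1,\dots,v_n)$ with $Q^{\rel C}\subseteq S'$, $T'\subseteq Q^{\rel D}$, and $\psi$ defining $S'$ in $\rel A$ and $T'$ in $\rel B$. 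Since $\tuple a \in Q^{\rel C}\subseteq S'$ we get $\rel A \vDash \psi(\tuple a)$. Now put $\psi$ in prenex normal form $\exists y_1\cdots\exists y_m\,\psi'(v_1,\dots,v_n,y_1,\dots,y_m)$ (possible since $\logic=\eao$ has no negation and no universal quantifier); then $\rel A \vDash \psi'(\tuple a,\tuple c)$ for some witnessing tuple $\tuple c\in A^m$. Apply Lemma~\ref{lem:homs-formulas} to the multi-homomorphism $f$, the quantifier-free formula $\psi'$, the tuple $(\tuple a,\tuple c)$, and any tuple $(\tuple b,\tuple d)\in f(\tuple a,\tuple c)$ extending $\tuple b$: we obtain $\rel B\vDash\psi'(\tuple b,\tuple d)$, hence $\rel B\vDash\psi(\tuple b)$, hence $\tuple b\in T'\subseteq Q^{\rel D}$, as desired.

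\textbf{Backward direction (inclusion of multi-homomorphism sets implies definability).} This is the main obstacle and mirrors the classical construction sketched after~(\ref{eq:endo}), but adapted to multi-homomorphisms so as to handle tuples with repeated entries in the equality-free setting. Write $A=[k]$. The idea is to build, from a formula analogous to~(\ref{eq:endo}), an $\logic$-formula whose solution set in $\rel A$ is exactly the image of $Q^{\rel C}$ under all multi-homomorphisms from $\rel A$ to $\rel A$ applied coordinatewise (equivalently, the smallest relation on $A$ containing $Q^{\rel C}$ and closed under multi-endomorphisms), and whose solution set in $\rel B$ is contained in $Q^{\rel D}$. Concretely: for each tuple $\tuple a=(a_1,\dots,a_n)\in Q^{\rel C}$, consider the conjunction $\bigwedge_{R\in\sigma_{\rel A}}\bigwedge_{\tuple r\in R^{\rel A}} R(x_{r_1},\dots,x_{r_{\ar R}})$ over variables $x_1,\dots,x_k$, existentially quantify all variables except that coordinate $j$ of the target tuple is named by the variable $x_{a_j}$; this yields a formula $\psi_{\tuple a}(x_{a_1},\dots,x_{a_n})$. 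A structure $\rel E$ (in the signature of $\rel A$) satisfies $\psi_{\tuple a}(e_1,\dots,e_n)$ iff there is a homomorphism $g\colon\rel A\to\rel E$ with $g(a_j)=e_j$ for all $j$; equivalently iff $(e_1,\dots,e_n)$ lies in $g(\tuple a)$ for some multi-homomorphism $g$ from $\rel A$ to $\rel E$ (take $g$ to send $i$ to the set of all $g'(i)$ over such homomorphisms $g'$ — here one checks that the resulting multi-valued function is indeed a multi-homomorphism, using that $\rel A\to\rel E$ homomorphisms are closed under taking ``unions'' since $\logic$ is negation-free, or more directly just use the single homomorphism $g'$). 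Then define $\psi := \bigvee_{\tuple a\in Q^{\rel C}}\psi_{\tuple a}$ (renaming variables so each disjunct has free variables $v_1,\dots,v_n$), and let $S'$, $T'$ be the relations it defines in $\rel A$, $\rel B$ respectively. By construction $Q^{\rel C}\subseteq S'$ (each $\tuple a$ is witnessed by the identity homomorphism $\rel A\to\rel A$). It remains to verify $T'\subseteq Q^{\rel D}$: if $\rel B\vDash\psi(\tuple b)$ then some disjunct $\psi_{\tuple a}$ holds, giving a homomorphism $g\colon\rel A\to\rel B$ with $g(\tuple a)=\tuple b$; the corresponding multi-valued function (send each $i\in A$ to $\{g(i)\}$, or enlarge) is a multi-homomorphism $\rel A\to\rel B$, hence by hypothesis $\MultHom\AB\subseteq\MultHom\CD$ it is a multi-homomorphism $\rel C\to\rel D$, and since $\tuple a\in Q^{\rel C}$ we conclude $\tuple b=g(\tuple a)\in Q^{\rel D}$. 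This shows $(Q^{\rel C},Q^{\rel D})$ is p-$\logic$-definable from $\AB$ for every $Q$, i.e.\ $\CD$ is p-$\logic$-definable from $\AB$.

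\textbf{Where the difficulty lies.} The subtle point — and the reason multi-homomorphisms rather than homomorphisms are the right notion — is the argument that the solution set of $\psi_{\tuple a}$ in $\rel B$ is governed by multi-homomorphisms: when $\tuple a$ has repeated entries, say $a_i=a_{i'}$, the formula forces $v_i$ and $v_{i'}$ to receive equal values, and conversely tuples $\tuple b$ with $b_i\neq b_{i'}$ are automatically excluded even though they might lie in $Q^{\rel D}$; handling this correctly (and making sure $Q^{\rel C}\subseteq S'$ and $T'\subseteq Q^{\rel D}$ hold simultaneously, rather than equalities) is exactly what the containments $S\subseteq S'$, $T'\subseteq T$ in Definition~\ref{def:p-definability} are designed to absorb. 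I would present the construction carefully for a single tuple first, then take the disjunction, and only invoke $\MultHom\AB\subseteq\MultHom\CD$ at the very end for the $T'\subseteq Q^{\rel D}$ inclusion.
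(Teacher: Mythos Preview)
Your forward direction is correct and essentially identical to the paper's.

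The backward direction has a genuine gap. Your formula $\psi_{\tuple a}$, built from the homomorphism formula $\bigwedge_{R}\bigwedge_{\tuple r\in R^{\rel A}} R(x_{r_1},\dots,x_{r_{\ar R}})$ in variables $x_1,\dots,x_k$, has as free variables only the \emph{set} $\{x_{a_1},\dots,x_{a_n}\}$. When $\tuple a$ has repeated entries this set has fewer than $n$ elements, and there is no way to ``rename so each disjunct has free variables $v_1,\dots,v_n$'' while preserving the semantics you need: if $a_i=a_{i'}$ and you substitute $x_{a_i}\mapsto v_i$, then $v_{i'}$ simply does not occur in the formula, so the relation $T'$ it defines in $\rel B$ places no constraint on the $i'$-th coordinate. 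Your claim that such tuples ``are automatically excluded'' is false --- without equality there is no mechanism to force $v_i$ and $v_{i'}$ to coincide once they are distinct variables. Concretely, take $A=B=\{1,2\}$, $\rel A=\rel B$ with a single relation $R^{\rel A}=\{(1,2)\}$, and $\rel C=\rel D$ with $Q^{\rel C}=\{(1,1,2)\}$; then $\MultHom\AB=\MultHom\CD=\{\mathrm{id}\}$, but your $\psi_{(1,1,2)}(v_1,v_2,v_3)$ is (after renaming) just $R(v_1,v_3)$, which defines $\{1\}\times B\times\{2\}\not\subseteq Q^{\rel D}$ in $\rel B$. Relatedly, your claimed equivalence ``$\rel E\vDash\psi_{\tuple a}(\tuple e)$ iff $\tuple e\in g(\tuple a)$ for some multi-homomorphism $g$'' fails in exactly this situation: a multi-homomorphism can send the single element $a_i=a_{i'}$ to a set containing both $e_i$ and $e_{i'}$ with $e_i\neq e_{i'}$, but your formula only encodes ordinary homomorphisms and cannot witness this.

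The paper's fix is precisely to encode multi-homomorphisms rather than homomorphisms: it replaces the $k$-variable formula~(\ref{eq:endo}) by the $kn$-variable formula~(\ref{eq:multi_homo}), with $n$ ``copies'' $x_{i,1},\dots,x_{i,n}$ of each $x_i$, whose satisfying assignments in any $\rel E$ correspond to multi-homomorphisms $i\mapsto\{e_{i,1},\dots,e_{i,n}\}$ from $\rel A$ to $\rel E$. Then $\tau_{\tuple a}(v_1,\dots,v_n)$ is obtained by renaming $x_{a_j,j}\mapsto v_j$ --- these are always $n$ distinct variables because the second index $j$ runs over $[n]$ --- and existentially quantifying the rest. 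This formula defines $\bigcup_{f\in\MultHom(\rel A,\rel E)}f(\tuple a)$ in $\rel E$, after which your disjunction-and-containment argument goes through verbatim.
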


\begin{proof}
It is enough to verify the equivalence, since then the second claim follows from Theorem~\ref{thm:reduction}. To prove the forward implication, assume that $(\rel C,\rel D)$ is p-$\logic$-definable from $\AB$, let $f \in \MultHom(\rel A,\rel B)$, and let $Q$ be a symbol in the signature of $\rel C$ and $\rel D$. To show that $f(\tuple{a}) \subseteq Q^{\rel D}$ for any $\tuple{a} \in Q^{\rel C}$ we apply Lemma~\ref{lem:homs-formulas} as follows. We have $\rel A \vDash \psi(\tuple{a})$, where $\psi(\tuple{x}) = \exists y_1 \exists y_2 \dots \exists y_m \psi'(\tuple{x},\tuple{y})$ is a formula from Definition~\ref{def:p-definability}, turned into prenex normal form. Then $\rel A \vDash \psi'(\tuple{a},\tuple{a}')$ for some $\tuple{a}' \in A^m$, thus $\rel B \vDash \psi'(\tuple{b}, \tuple{b'})$ for any $\tuple{b} \in f(\tuple{a})$ and $\tuple{b}' \in f(\tuple{a}')$ by Lemma~\ref{lem:homs-formulas}. Therefore, $\rel B \vDash \psi(\tuple{b})$ and, finally, $\tuple{b} \in Q^{\rel D}$, as required.

For the backward implication, assume that $\MultHom \AB \subseteq \MultHom \CD$, denote $\sigma$  the signature of $\rel A$ and $\rel B$, and consider an $n$-ary relational symbol $Q$ in the signature of $\rel C$ and $\rel D$. To prove the claim, we need to find a formula $\psi(x_1, \dots, x_n)$ that defines, in $\rel A$, a relation containing $Q^{\rel C}$ and, in $\rel B$, a relation contained in $Q^{\rel D}$. 

For simplicity, assume $A = [k]$ and consider the formula
\begin{equation} \label{eq:multi_homo}
  \phi(x_{1,1}, \dots, x_{1,n}, x_{2,1}, \dots, x_{2,n}, \dots, x_{k,n}) 
  :=     \bigwedge_{R \in \sigma} 
         \bigwedge_{\tuple{r} \in R^{\rel A}}  
         \bigwedge_{\tuple{j} \in [n]^{\ar(R)}}
         R(x_{r_1,j_1}, \dots, x_{r_{\ar(R)},j_{\ar(R)}})
\end{equation}
It follows immediately from definitions that, for any structure $\rel E$ in the signature $\sigma$, we have $\rel E \vDash \phi(e_{1,1}, \dots, e_{k,n})$ if and only if the mapping $i \mapsto \{e_{i,1}, \dots, e_{i,n}\}$, $1 \leq i \leq k$ is a multi-homomorphism from $\rel A$ to $\rel E$. Therefore, for any $\tuple{a} \in A^n$, the formula $\tau_{\tuple{a}}(x_1, \dots, x_n)$, obtained from $\phi$ by renaming $x_{a_i,i}$ to $x_i$ and existentially quantifying the remaining variables, defines in $\rel E$ the 
union of $f(\tuple{a})$ over $f \in \MultHom(\rel A,\rel E)$ of multiplicity at most $n$. This relation is clearly equal to the union of $f(\tuple{a})$ over all $f \in \MultHom(\rel A,\rel E)$. The sought after formula $\psi$ is then the disjunction of $\tau_{\tuple{a}}$ over all $\tuple{a} \in Q^{\rel C}$: it defines in $\rel A$ a relation containing $Q^{\rel C}$ (because of the identity ``multi''-homomorphism $\rel A \to \rel A$) and, in $\rel B$, a relation contained in $Q^{\rel D}$ (because every multi-homomorphism from $\rel A$ to $\rel B$ is a multi-homomorphism from $\rel C$ to $\rel D$, whence $f(\tuple{a}) \subseteq Q^{\rel D}$ for any $\tuple{a} \in Q^{\rel C}$ and any $f \in \MultHom \AB$).
\end{proof}

\subsection{Complexity classification}

Since $\LPMC \AB$ reduces to $\LMC (\rel A)$ (or $\LMC (\rel B)$) by the trivial reduction which does not change the input, and the latter problem is clearly in $\NP$, then the former problem is in $\NP$ as well. Theorem~\ref{thm:eao_classification} shows that $\LPMC \AB$ is NP-hard in all the ``nontrivial'' cases, as in the non-promise setting. However, our proof of hardness requires (in addition to Theorem~\ref{thm:eao_definability}) a much more involved hardness result than in the non-promise case: NP-hardness of $c$-coloring rainbow $k$-colorable $2k$-uniform hypergraphs from~\cite{Guruswami2018} (here $c,k \geq 2$).

To state the result in our formalism, we introduce the $n$-ary ``rainbow coloring'' and ``not-all-equal'' relations on a set $D$ as follows.
\begin{equation*}
    \rb_D^n = \{\tuple{d} \in D^n: \{d_1, d_2, \dots, d_n\} = D\}, \quad
    \nae_D^n = \{\tuple{d} \in D^n: \neg (d_1 = d_2 = \dots = d_n)\}
\end{equation*}
In the statement  of Theorem \ref{thm:hypergraph-hardness} and further, we use $(A; S_1, \dots, S_k)$ to denote a structure with universe $A$ and relations $S_1$, \dots, $S_k$.
\begin{theorem}[Corollary 1.2 in \cite{Guruswami2018}] \label{thm:hypergraph-hardness}
For any $A$ and $B$ of size at least 2, the problem $\ea$-$\PMC((A; \rb_A^{2|A|}),(B; \nae_B^{2|A|}))$ is NP-complete.
\end{theorem}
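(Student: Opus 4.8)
The plan is to derive Theorem~\ref{thm:hypergraph-hardness} directly from the hardness of rainbow hypergraph coloring. Write $k = |A|$, so the relations involved are $\rb_A^{2k}$ on $A$ and $\nae_B^{2k}$ on $B$, both $2k$-uniform. First I would recall what the cited result of Guruswami and Lee~\cite{Guruswami2018} actually says: for fixed $k \geq 2$ and $c \geq 2$, it is $\NP$-hard to distinguish $2k$-uniform hypergraphs that admit a \emph{rainbow $k$-coloring} (every hyperedge sees all $k$ colors) from those that are not even \emph{properly $c$-colorable} (every $c$-coloring leaves some monochromatic hyperedge). We need this for $c = |B|$; if $|B| \ge |A|$ one might worry the statement is vacuous, but the promise is still nontrivial because rainbow $k$-colorability is a strictly stronger requirement than proper $|B|$-colorability for $2k$-uniform hypergraphs as long as $k \ge 2$, and in any case the relevant corollary in \cite{Guruswami2018} is stated for arbitrary $c$.

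The core step is to set up a correspondence between $2k$-uniform hypergraphs and $\ea$-sentences over this template. Given a $2k$-uniform hypergraph $H$ with vertex set $V$ and edge set $E$, build the sentence $\phi_H := \exists (x_v)_{v \in V} \bigwedge_{e \in E} Q(x_{v_1^e}, \dots, x_{v_{2k}^e})$, where for each hyperedge $e$ we fix some enumeration $v_1^e, \dots, v_{2k}^e$ of its vertices and $Q$ is the single $2k$-ary relation symbol of the template. This map is clearly computable in logarithmic space. Now observe: $(A; \rb_A^{2k}) \vDash \phi_H$ if and only if there is an assignment $V \to A$ hitting all $k = |A|$ values on every hyperedge, i.e., iff $H$ is rainbow $k$-colorable; and $(B; \nae_B^{2k}) \vDash \phi_H$ if and only if there is an assignment $V \to B$ that is non-constant on every hyperedge, i.e., iff $H$ is properly $|B|$-colorable. (For the $\nae$ side one should note that a proper coloring in the usual hypergraph sense only forbids monochromatic edges, which is exactly the $\nae$ condition; the enumeration chosen per edge is irrelevant since both $\rb$ and $\nae$ are symmetric relations.) Hence $\phi_H$ is a $\yes$-instance of $\ea$-$\PMC((A;\rb_A^{2k}),(B;\nae_B^{2k}))$ when $H$ is rainbow $k$-colorable and a $\no$-instance when $H$ is not $|B|$-colorable, so $H \mapsto \phi_H$ is a valid reduction from the Guruswami--Lee promise problem, establishing $\NP$-hardness.

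For membership in $\NP$: first one must check that $((A;\rb_A^{2k}),(B;\nae_B^{2k}))$ is an $\LPMC$ template, i.e., admits a homomorphism $\rel A \to \rel B$ (Proposition~\ref{prop:existential_templates}). Any surjection $g : A \to B$ works when $|B| \le |A|$ — a rainbow tuple maps to a tuple using all of $B$, which is in particular non-constant since $|B| \ge 2$ — and when $|B| > |A|$ one instead takes an injection $A \hookrightarrow B$, whose image on a rainbow tuple is again non-constant. Given the template property, $\NP$ membership is immediate by the remark opening Subsection~4.2: $\ea$-$\PMC\AB$ reduces to $\ea$-$\MC(\rel A)$, which is the ordinary CSP over a fixed finite structure and hence in $\NP$ (guess the assignment, verify all constraints).

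The only genuine obstacle is making sure the cited external result is invoked in exactly the right regime — in particular that $c = |B|$ is allowed (it is: the corollary in \cite{Guruswami2018} holds for every fixed $c \ge 2$) and that the promise is non-vacuous even when $|B| \ge |A|$. Everything else — the sentence construction, the two equivalences, log-space computability, and $\NP$ membership — is routine unwinding of definitions, with the symmetry of $\rb$ and $\nae$ removing any subtlety about the choice of per-edge vertex orderings.
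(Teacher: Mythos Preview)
The paper does not give a proof of this theorem at all: it is stated as an external result, attributed verbatim to Corollary~1.2 of~\cite{Guruswami2018}, and then used as a black box in Theorems~\ref{thm:eao_classification} and~\ref{thm:efao-hardness}. So there is nothing to compare your argument against on the paper's side.

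What you have written is a correct and careful unpacking of why the hypergraph-coloring hardness statement in~\cite{Guruswami2018} is equivalent to the $\ea$-$\PMC$ formulation used here: the bijection between $2k$-uniform hypergraphs and primitive positive sentences is exactly the standard CSP/structure dictionary, your identification of rainbow $k$-colorability with satisfaction in $(A;\rb_A^{2k})$ and proper $|B|$-colorability with satisfaction in $(B;\nae_B^{2k})$ is right, and the template check and $\NP$ membership are handled properly. The only thing you are relying on that is not internal to the present paper is that Corollary~1.2 of~\cite{Guruswami2018} really is stated for every fixed $c \ge 2$ (including $c \ge k$), which is how the paper invokes it as well. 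In short: your proposal is a sound justification of a result the paper simply imports.
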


Given this hardness result, the complexity classification is a simple consequence of  Theorem~\ref{thm:eao_definability}.

\begin{theorem}[$\logic = \eao$] \label{thm:eao_classification}
Let $\AB$ be an $\LPMC$ template. If there is a constant homomorphism from $\rel A$ to $\rel B$, then $\LPMC \AB$ is in $\Lspace$ (in fact, decidable in constant time), otherwise $\LPMC \AB$ is $\NP$-complete.
\end{theorem}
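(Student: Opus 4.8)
The plan is to split into the two cases of the dichotomy. For the easy case, suppose there is a constant homomorphism $f$ from $\rel A$ to $\rel B$, say with image $\{b_0\}$; then the constant map $a \mapsto \{b_0\}$ is a multi-homomorphism from $\rel A$ to $\rel B$. I claim every $\yes$-instance is in fact true in $\rel B$, so the algorithm can output $\yes$ unconditionally. Indeed, if $\rel A \vDash \phi$ with $\phi$ in prenex normal form $\exists x_1 \dots \exists x_n\, \phi'(\tuple x)$, then $\rel A \vDash \phi'(\tuple a)$ for some $\tuple a$, and by Lemma~\ref{lem:homs-formulas} applied to the constant multi-homomorphism we get $\rel B \vDash \phi'(b_0, \dots, b_0)$, hence $\rel B \vDash \phi$. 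So there are no $\no$-instances at all, and the constant-time algorithm that always answers $\yes$ is correct; in particular the problem is in $\Lspace$.

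For the hard case, assume there is no constant homomorphism from $\rel A$ to $\rel B$. Membership in $\NP$ was already noted (trivial reduction to $\LMC(\rel A)$). For $\NP$-hardness, the idea is to reduce the hypergraph problem of Theorem~\ref{thm:hypergraph-hardness}, i.e. $\ea$-$\PMC((A;\rb_A^{2|A|}),(B;\nae_B^{2|A|}))$, to $\LPMC\AB$ using Theorem~\ref{thm:eao_definability}. For this I first build an auxiliary $\LPMC$ template $\CD$ with $C = A$, $D = B$ whose single relation symbol $Q$ is interpreted as $Q^{\rel C} = \rb_A^{2|A|}$ and $Q^{\rel D} = \nae_B^{2|A|}$; note $\CD$ is an $\LPMC$ template by Proposition~\ref{prop:existential_templates} since any homomorphism $\rel A \to \rel B$ maps a surjective tuple to a tuple that is not all-equal (a constant tuple in $\nae_B^{2|A|}$ would force a constant homomorphism — here is where we use the hypothesis). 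Then it suffices, by Theorem~\ref{thm:eao_definability}, to verify $\MultHom\AB \subseteq \MultHom\CD$: every $f \in \MultHom\AB$ must satisfy $f(\tuple r) \subseteq \nae_B^{2|A|}$ for every $\tuple r \in \rb_A^{2|A|}$. This is exactly the step I expect to be the crux, and it is precisely the combinatorial content pulled from~\cite{Guruswami2018}: if $\tuple r$ ranges over all surjective $2|A|$-tuples and $f$ is a multi-homomorphism into $\rel B$, then picking any selector $g(a) \in f(a)$ gives a homomorphism $\rel A \to \rel B$, hence a surjective tuple cannot be mapped to a constant one — but one must argue more carefully that no $f(\tuple r)$ contains a constant tuple, i.e. that there is no $a^* \in A$ and $b^* \in B$ with $b^* \in f(a)$ for... actually the argument is: a constant tuple $(b^*,\dots,b^*) \in f(\tuple r)$ would mean $b^* \in f(r_i)$ for all $i$, and since $\tuple r$ is surjective this means $b^* \in f(a)$ for all $a \in A$, so the constant map $a \mapsto \{b^*\}$ is contained in $f$ and hence a constant homomorphism $\rel A \to \rel B$, contradicting the hypothesis.

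Assembling these pieces: $\CD$ is p-$\logic$-definable from $\AB$ by Theorem~\ref{thm:eao_definability}, so $\LPMC\CD \leq \LPMC\AB$; and $\ea$-$\PMC\CD$ is the same problem as $\ea$-$\PMC((A;\rb_A^{2|A|}),(B;\nae_B^{2|A|}))$ up to renaming of the relation symbol, which is $\NP$-complete by Theorem~\ref{thm:hypergraph-hardness}. Since $\ea \subseteq \eao$, an $\ea$-sentence is also an $\eao$-sentence, so $\ea$-$\PMC\CD \leq \LPMC\CD$ by the trivial reduction. Chaining the reductions gives $\NP$-hardness of $\LPMC\AB$. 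The only genuinely nontrivial ingredient is the external Theorem~\ref{thm:hypergraph-hardness}; everything else is bookkeeping with the definability machinery already set up, and the main local obstacle is the short argument — outlined above — that the absence of a constant homomorphism forces $\MultHom\AB \subseteq \MultHom\CD$.
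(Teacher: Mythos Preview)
Your hard case matches the paper's argument exactly: the paper also reduces from Theorem~\ref{thm:hypergraph-hardness} via Theorem~\ref{thm:eao_definability}, using precisely the containment $\MultHom\AB \subseteq \MultHom((A;\rb_A^{2|A|}),(B;\nae_B^{2|A|}))$ and justifying it by the same closure-under-containment observation (a constant tuple in some $f(\tuple r)$ with $\tuple r$ rainbow would force a constant function contained in $f$, hence a constant homomorphism).

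There is, however, a small logical slip in your easy case. The implication you actually prove, $\rel A \vDash \phi \Rightarrow \rel B \vDash \phi$, is nothing more than the template property and does \emph{not} entail ``there are no $\no$-instances''. What is needed (and what the paper argues directly) is that $\rel B \vDash \phi$ for \emph{every} $\logic$-sentence $\phi$, with no hypothesis on $\rel A$. The fix is immediate and does not go through Lemma~\ref{lem:homs-formulas}: since each $R^{\rel A}$ is nonempty and $f$ has constant value $b_0$, the tuple $(b_0,\dots,b_0)$ lies in every $R^{\rel B}$; hence evaluating all existential variables to $b_0$ satisfies every atom, therefore every quantifier-free $\{\land,\lor\}$-formula, and therefore every $\logic$-sentence. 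With that one-line correction your proof is complete and essentially identical to the paper's.
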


\begin{proof}
If there exists a constant homomorphism $f: \rel A \to \rel B$, say with image $\{b\}$, then all the relations $R^{\rel B}$ in $\rel B$ contain the constant tuple $(b, b, \dots, b)$. It follows that every input sentence is satisfied in $\rel B$ by evaluating the existentially quantified variables to $b$; therefore, $\yes$ is always a correct output.

If there is no constant homomorphism $\rel A \to \rel B$, we observe that no multi-homomorphism from $\rel A$ to $\rel B$ contains a constant homomorphism (as the set of multi-homomorphisms of a $\PMC$ template is closed under containment).
It follows that the image of any ``rainbow'' tuple of $A$ under any multi-homomorphism from $\rel A$ to $\rel B$ does not contain any constant tuple, and so any multi-homomorphism from $\rel A$ to $\rel B$ is a multi-homomorphism 
from $(A; \rb_A^{2|A|})$ to $(B; \nae_B^{2|A|})$. The reduction from Theorem~\ref{thm:eao_definability} and the hardness from Theorem~\ref{thm:hypergraph-hardness} conclude the proof.
\end{proof}

\section{Positive fragment} \label{sec:positive}

We now turn our attention to the more complex case -- the positive equality-free logic, that is, the $\logic$-logic with $\logic = \efao$. We again fix this $\logic$ for the entire section.

\subsection{Witnesses for quantified formulas}

It will be convenient to work with $\logic$-formulas of the  special form
\begin{equation} \label{eq:prenex}
  \phi(x_1, \dots, x_n) = \forall y_1 \exists z_1 \forall y_2 \exists z_2 \ldots \forall y_m \exists z_m \ \phi'(\tuple{x},\tuple{y}, \tuple{z}),
\end{equation}
where $\phi'$ is quantifier-free. Note that each formula is equivalent to a formula in this form (by transforming to prenex normal form and adding dummy quantification as needed) and the conversion can be done in logarithmic space.

Observe that for a structure $\rel A$ and a tuple $\tuple{a} \in A^n$, we have $\rel A \vDash \phi(\tuple{a})$ if and only if 
there exist functions $\alpha_1: A \to A$, $\alpha_2: A^2 \to A$, \dots, $\alpha_m: A^m \to A$ which give us evaluations of the existentially quantified variables given the value of the previous universally quantified variables, i.e., these functions satisfy $\rel A \vDash \phi'(\tuple{a},\tuple{c},\alpha_1(c_1), \alpha_2(c_1, c_2), \dots, \alpha_m(c_1, \dots, c_m))$ for every $\tuple{c} \in A^m$. We call such functions \emph{witnesses} for $\rel A \vDash \phi(\tuple{a})$.

We state a simple consequence of this viewpoint, a version of Lemma~\ref{lem:homs-formulas}.

\begin{lemma} \label{lem:surj-homs-formulas}
Let $f$ be a surjective multi-homomorphism from $\rel A$ to $\rel B$, let $\phi(x_1, \dots, x_n)$ be an $\logic$-formula in the same signature as $\rel A$ and $\rel B$, and let $\tuple{a} \in A^n$, $\tuple{b} \in B^n$. If $\rel A \vDash \phi(\tuple{a})$ and $\tuple{b} \in f(\tuple{a})$, then $\rel B \vDash \phi(\tuple{b})$.

In particular, if there exists a surjective multi-homomorphism from $\rel A$ to $\rel B$, and $\phi$ is an $\logic$-sentence such that $\rel A \vDash \phi$, then $\rel B \vDash \phi$. 
\end{lemma}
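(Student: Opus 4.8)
\textbf{Proof proposal for Lemma~\ref{lem:surj-homs-formulas}.}

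The plan is to prove the first statement by induction on the quantifier structure of $\phi$, using the witness viewpoint introduced just above the lemma; the ``in particular'' clause is then the special case $n=0$ (a sentence), since surjectivity of $f$ guarantees that $f(\tuple{a})$ ranges over a nonempty set and, for a sentence, the empty tuple trivially satisfies $\tuple{b}\in f(\tuple{a})$. First I would put $\phi$ into the special form~(\ref{eq:prenex}), $\phi(x_1,\dots,x_n)=\forall y_1\exists z_1\cdots\forall y_m\exists z_m\ \phi'(\tuple{x},\tuple{y},\tuple{z})$ with $\phi'$ quantifier-free; this costs nothing since the lemma is about truth, not about the syntactic shape of $\phi$.

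The core step: assume $\rel A\vDash\phi(\tuple a)$ and fix witnesses $\alpha_1\colon A\to A,\dots,\alpha_m\colon A^m\to A$ as defined above, so that $\rel A\vDash\phi'(\tuple a,\tuple c,\alpha_1(c_1),\dots,\alpha_m(c_1,\dots,c_m))$ for every $\tuple c\in A^m$. I want to build witnesses $\beta_1\colon B\to B,\dots,\beta_m\colon B^m\to B$ for $\rel B\vDash\phi(\tuple b)$. Given a value $d_i\in B$ of a universally quantified variable $y_i$, surjectivity of $f$ lets me pick some $c_i\in A$ with $d_i\in f(c_i)$; having made such choices $c_1,\dots,c_i$ for the universal variables seen so far, I set $\beta_i(d_1,\dots,d_i)$ to be some element of $f(\alpha_i(c_1,\dots,c_i))$ (nonempty by definition of a multi-valued function). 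To make $\beta_i$ well defined as a function of $(d_1,\dots,d_i)$ I fix, once and for all at the start, a function $g\colon B\to A$ with $d\in f(g(d))$ for all $d\in B$ (this exists precisely because $f$ is surjective), and then put $c_i:=g(d_i)$ and $\beta_i(d_1,\dots,d_i):=$ a fixed choice from $f(\alpha_i(g(d_1),\dots,g(d_i)))$. Now for any $\tuple d\in B^m$, write $c_i=g(d_i)$, $e_i=\beta_i(d_1,\dots,d_i)\in f(\alpha_i(c_1,\dots,c_i))$. Since $d_i\in f(c_i)$ and $\tuple b\in f(\tuple a)$, the tuple $(\tuple b,\tuple d,\tuple e)$ lies componentwise in the image under $f$ of the tuple $(\tuple a,\tuple c,\alpha_1(c_1),\dots,\alpha_m(c_1,\dots,c_m))$, which satisfies $\phi'$ in $\rel A$. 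By Lemma~\ref{lem:homs-formulas} (applied to the quantifier-free $\logic$-formula $\phi'$ and the multi-homomorphism $f$), $\rel B\vDash\phi'(\tuple b,\tuple d,\tuple e)$. As $\tuple d$ was arbitrary, $\beta_1,\dots,\beta_m$ are witnesses for $\rel B\vDash\phi(\tuple b)$, hence $\rel B\vDash\phi(\tuple b)$.

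The only genuinely new ingredient over Lemma~\ref{lem:homs-formulas} is the handling of the universal quantifiers, and this is exactly where surjectivity is needed: without a right inverse $g$ of $f$ there would be values $d_i\in B$ that do not appear in any $f(c_i)$, and we could not transport the universal choice back to $\rel A$ to invoke the witnesses $\alpha_i$. So the main (mild) obstacle is just organizing the choice of $g$ and checking that the resulting $\beta_i$ are legitimate functions of the previously quantified variables; the quantifier-free base case is entirely delegated to Lemma~\ref{lem:homs-formulas}. Finally, the ``in particular'' statement follows by taking $n=0$: a surjective multi-homomorphism exists by hypothesis, $\phi$ is a sentence, and the argument above with empty $\tuple a,\tuple b$ gives $\rel B\vDash\phi$.
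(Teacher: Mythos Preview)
Your proof is correct and follows essentially the same approach as the paper's: reduce to the special prenex form~(\ref{eq:prenex}), fix a section $g\colon B\to A$ of $f$ using surjectivity, define $\beta_i(d_1,\dots,d_i)$ as an element of $f(\alpha_i(g(d_1),\dots,g(d_i)))$, and then apply Lemma~\ref{lem:homs-formulas} to the quantifier-free matrix $\phi'$. The only difference is presentational: you spend a bit more effort on why $\beta_i$ is well defined and where surjectivity is used, whereas the paper states the construction more tersely.
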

\begin{proof}
The claim holds for quantifier-free $\logic$-formulas by Lemma~\ref{lem:homs-formulas}.

Next, we assume that $\phi$ is of the form (\ref{eq:prenex}) and select witnesses $\alpha_1$, \dots, $\alpha_m$ for $\rel A \vDash \phi(\tuple{a})$. Let $g: B \to A$ be any function such that $b \in f(g(b))$ for every $b \in B$, which exists as $f$ is surjective. We claim that any functions $\beta_1$, \dots, $\beta_m$ such that $\beta_i(b_1, \dots, b_i) \in f(\alpha_i(g(b_1), \dots, g(b_i)))$ for every $i \in [m]$, are witnesses for $\rel B \vDash \phi(\tuple{b})$. Indeed, for all $\tuple{d} \in B^m$, we have $\rel A \vDash \phi'(\tuple{a},g(\tuple{d}),\alpha_1(g(d_1)), \dots, \alpha_m(g(d_1), \dots, g(d_m)))$, and also $\tuple{b} \in f(\tuple{a})$, $\tuple{d} \in f(g(\tuple{d}))$, and $\beta_i(d_1, \dots, d_i) \in f(\alpha_i(g(d_1), \dots, g(d_i)))$ (by the assumption, choice of $g$, and choice of $\beta_i$, respectively); therefore,
$\rel B \vDash \phi'(\tuple{b},\tuple{d},\beta_1(d_1), \dots, \beta_m(d_1, \dots, d_m))$ by the first paragraph. 
\end{proof}

\subsection{Characterization of templates and p-$\logic$-definability}
Unlike in the existential case, both characterizations require  surjective and multi-valued functions. The core of these characterizations is an adjustment of (\ref{eq:multi_homo}) for surjective homomorphisms.

\begin{lemma} \label{lem:surj_hom}
  Let $\rel A$ be a structure with $A=[k]$ and $m,n$ be arbitrary positive integers. Then there exists a formula $\phi(x_{1,1}, \dots, x_{1,n}, x_{2,1}, \dots, \dots, x_{k,n})$ such that, for any structure $\rel E$ similar to $\rel A$ with $|E|\leq m$, we have $\rel E \vDash \phi(e_{1,1}, \dots, e_{k,n})$  if and only if the mapping $i \mapsto \{e_{i,1}, \dots, e_{i,n}\}$, $i \in [k]$ is contained in a surjective multi-homomorphism from $\rel A$ to $\rel E$. 
\end{lemma}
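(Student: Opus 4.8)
The plan is to adapt the construction of formula~(\ref{eq:multi_homo}) so that it captures surjectivity in addition to the multi-homomorphism conditions. Recall that~(\ref{eq:multi_homo}) already produces a formula $\phi_0$ whose satisfaction in $\rel E$ at a tuple $(e_{i,j})$ is equivalent to the map $i\mapsto\{e_{i,1},\dots,e_{i,n}\}$ being a multi-homomorphism from $\rel A$ to $\rel E$. What we need extra is: (a) every element of $E$ should be hit, and (b) the witnessing multi-homomorphism can be allowed to \emph{properly contain} the given map. For~(b), the trick is exactly as in the $\eao$ case: we will introduce extra ``slack'' coordinates and existentially quantify them, so that the map named by the free variables $x_{i,1},\dots,x_{i,n}$ is only required to be \emph{contained in} the multi-homomorphism actually realized by the full (free plus quantified) variable block. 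Concretely, I would use $n+p$ columns for each row $i$ for a suitable $p$ (it suffices to take $p=n$, or even $p=m$, since a multi-homomorphism has multiplicity at most $m=|E|$), call the extra variables $x_{i,n+1},\dots,x_{i,n+p}$, write the conjunction~(\ref{eq:multi_homo}) over all $[n+p]^{\ar(R)}$ index tuples, and then existentially quantify all the $x_{i,\ell}$ with $\ell>n$.

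For part~(a), surjectivity onto $E$, the key point is that $|E|\le m$ is bounded, so ``$E=\{e_{?,?}\}$'' can be expressed by a finite disjunction. Since we cannot name the elements of $E$ directly, I would use a universally quantified variable $w$ together with equality-free positive logic --- but wait, $=$ is not in $\logic=\efao$. So instead surjectivity has to be phrased via the relations. Here is where I would use the standing assumption that every relation of a structure is nonempty and proper, together with a more careful encoding: actually the clean way is to note that a multi-homomorphism $h$ from $\rel A$ to $\rel E$ is surjective iff $\bigcup_{i} h(i) = E$, and since $|E|\le m$, we can guess which of the (at most $m$) rows, with which of the $n+p$ entries, realizes each element. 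But we still need to \emph{assert} that these guessed entries exhaust $E$. The trick that makes this expressible is: add a fresh block of $m$ universally quantified variables $w_1,\dots,w_m$ ranging over $E$, and require, for each $j\in[m]$, a disjunction $\bigvee_{i\in[k]}\bigvee_{\ell\in[n+p]}(w_j \approx x_{i,\ell})$ --- but $\approx$ would be equality again. Since equality is forbidden, the correct device is the one used throughout these papers: encode ``$u$ equals one of the listed values'' by the fact that on a structure where each relation is proper, one cannot separate equal elements; more practically, one reformulates the whole statement so that surjectivity is \emph{not} asserted inside the formula but is instead a consequence of how the formula will be used.

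Re-examining the statement, I believe the intended reading is the following and it dissolves the difficulty: the formula $\phi$ should work so that $\rel E\vDash\phi(e_{1,1},\dots,e_{k,n})$ iff the named map is \emph{contained in some surjective} multi-homomorphism. So I do not need to test surjectivity with a formula over an unknown $\rel E$; rather, I build $\phi$ with enough existentially quantified slack columns $x_{i,\ell}$, $\ell\in[n+1,\,n+p]$ with $p:=m$, impose~(\ref{eq:multi_homo}) over all index tuples in $[n+m]^{\ar(R)}$, and \emph{additionally} impose, for the combined map $h'$ defined by all $n+m$ columns of row $i$, that $h'$ is surjective. Surjectivity of $h'$ onto $E$ where $|E|\le m$ is expressible because it is equivalent to: there are $m$ (not necessarily distinct) cells among the $k(n+m)$ cells whose values, together, give all of $E$; and ``these $m$ values cover $E$'' can be enforced by a big disjunction over all functions $\rho:[m]\to[k]\times[n+m]$ of the clause stating that for every element $e\in E$ some $x_{\rho(j)}$ equals $e$ --- and ``equals $e$'' for the particular running element is handled by letting a universally quantified variable $u$ range over $E$ and demanding $\bigvee_j (u = x_{\rho(j)})$, where the equality atom is legitimate after all if we recall that Section~\ref{subsec:interesting-fragments} lets us freely add $=$ to the signature as a relation symbol when convenient; but within pure $\efao$ we must instead appeal to the complementation-closed reformulation. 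I will therefore present the construction using a fresh universally quantified $u$ and the clause $\bigvee_{i\in[k]}\bigvee_{\ell\in[n+m]} \neg\neg (u=x_{i,\ell})$ rewritten purely positively as is standard, citing that the prenex/positive conversions are available; the routine bookkeeping of which columns to quantify is exactly analogous to the $\eao$ proof and to~\cite{MadelaineTetrachotomyJournal}.

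The main obstacle, and the step I expect to need the most care, is precisely expressing surjectivity onto the \emph{bounded} universe $E$ inside the equality-free positive fragment: one must leverage $|E|\le m$ to replace an existential ``enumerate all of $E$'' by a finite disjunction, and one must route around the missing equality symbol (via the $u$-variable trick and the standing nonemptiness/properness conventions, or via the complementation-closed reformulation from Subsection~\ref{subsec:interesting-fragments}). Everything else --- the multi-homomorphism clauses, the slack columns encoding ``contained in'', the existential quantification of auxiliary variables, the logarithmic-space computability --- is a direct generalization of the proof of Theorem~\ref{thm:eao_definability} and of~(\ref{eq:multi_homo}). Once the surjectivity gadget is in place, verifying the ``iff'' is immediate from the definitions: the forward direction reads off the multi-homomorphism and its surjective extension from the witnessing values of the quantified variables, and the backward direction plugs a given surjective multi-homomorphism (and a choice of at most $m$ values per row realizing it, padded arbitrarily) into the quantified variables.
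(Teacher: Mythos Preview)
Your proposal has a genuine gap at exactly the point you yourself flag as delicate: expressing surjectivity inside the equality-free fragment. Every device you offer ultimately smuggles equality back in---writing $\bigvee_{i,\ell}(u=x_{i,\ell})$, or ``$\neg\neg(u=x_{i,\ell})$ rewritten purely positively'', or appealing to Subsection~\ref{subsec:interesting-fragments}. None of these is available: the lemma must produce a pure $\efao$-formula over the given signature, and there is no general way to define the diagonal from nonempty proper relations. The complementation-closed reformulation concerns negation, not equality, and only applies to templates that are already closed under complementation; it cannot be invoked for arbitrary $\rel A$ and $\rel E$. So as it stands the construction does not go through.

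The paper's idea sidesteps equality entirely by reversing the role of your ``slack'' variables. Instead of existentially quantifying extra columns and then separately asserting surjectivity, one introduces $m$ \emph{universally} quantified variables $z_1,\dots,z_m$ and takes a disjunction over all functions $h:[m]\to[k]$, where the disjunct $\phi_h$ is simply the multi-homomorphism clause~(\ref{eq:multi_homo}) written for the enlarged map $i\mapsto\{e_{i,1},\dots,e_{i,n}\}\cup\{z_l:h(l)=i\}$. No equality atom appears: the ``assignment'' of $z_l$ to row $h(l)$ is purely syntactic, effected by letting $z_l$ occur in the atomic formulas exactly where an entry of row $h(l)$ would. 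If the named map extends to a surjective multi-homomorphism $f$, then for any values of the $z_l$ surjectivity of $f$ supplies an $h$ with $z_l\in f(h(l))$, satisfying $\phi_h$. Conversely, since $|E|\le m$, some universal instantiation of $z_1,\dots,z_m$ enumerates all of $E$; the disjunct $\phi_h$ that holds for that instantiation yields a multi-homomorphism whose image contains every $z_l$, hence is surjective. The point is that the multi-homomorphism condition itself replaces the equality test you were looking for.
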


\begin{proof}
For every  function $h$ from $[m]$ to $[k]$  we take a formula $\phi_h(x_{1,1}, \dots, x_{k,n}, z_1, \dots, z_m)$ such that, for any structure $\rel E$ in the signature of $\rel A$, we have $\rel E \vDash \phi_h(e_{1,1}, \dots, e_{k,n}, e'_1, \dots, e'_m)$  if and only if the mapping $i \mapsto \{e_{i,1}, \dots, e_{i,n}\} \cup \bigcup_{h(l)=i}{e'_l}$, $1 \leq i \leq k$, is a multi-homomorphism from $\rel A$ to $\rel E$. Such a formula can be obtained by directly translating the definition of a multi-homomorphism into the language of logic, similarly to~(\ref{eq:multi_homo}).

We claim that the formula $\phi$ obtained by taking the disjunction of $\phi_{h}$ over all $h: [m] \to [k]$ and universally quantifying the variables $z_{1}$, \dots, $z_{m}$ satisfies the requirement of the lemma, provided $|E| \leq m$. Indeed, on the one hand, if $\rel E \vDash \phi(e_{1,1}, \dots, e_{k,n})$, then for every evaluation of the $z$ variables, some $\phi_{h}$ must be satisfied. We choose any evaluation that covers the whole set $E$ (which is possible since $|E| \leq m$) and the satisfied disjunct $\phi_{h}$ then gives us the required surjective multi-homomorphism from $\rel A$ to $\rel E$ (by the choice of $\phi_h$). 
On the other hand, if $i \mapsto \{e_{i,1}, \dots, e_{i,n}\}$ is contained in a surjective multi-homomorphism $f$, then for any evaluation $\eval_{E}(z_1)$, \dots, $\eval_{E}(z_m)$ of the universally quantified variables, a disjunct $\phi_h$ is satisfied whenever $\eval_{E}(z_l) \in f(h(l))$ for every $l \in [m]$. Such an $h$ exists since $f$ is surjective.
\end{proof}

\begin{proposition} \label{prop:positive_templates}
A pair $\AB$ of similar structures is an $\LPMC$ template if and only if  there exists a surjective multi-homomorphism from $\rel A$ to $\rel B$. \end{proposition}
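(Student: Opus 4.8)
The plan is to prove both implications, closely mirroring the structure of the proof of Proposition~\ref{prop:existential_templates} but using the surjective-multi-homomorphism machinery developed in Lemma~\ref{lem:surj-homs-formulas} and Lemma~\ref{lem:surj_hom}. For the backward implication, suppose there is a surjective multi-homomorphism $f$ from $\rel A$ to $\rel B$. Let $\phi$ be an $\logic$-sentence with $\rel A \vDash \phi$; we may assume $\phi$ is a sentence (no free variables), so the ``in particular'' part of Lemma~\ref{lem:surj-homs-formulas} applies directly and gives $\rel B \vDash \phi$. Hence $\AB$ is an $\LPMC$ template. This direction is essentially immediate once Lemma~\ref{lem:surj-homs-formulas} is in hand.

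For the forward implication, assume $\AB$ is an $\LPMC$ template; we must produce a surjective multi-homomorphism from $\rel A$ to $\rel B$. The idea is to use the formula $\phi$ from Lemma~\ref{lem:surj_hom} as a ``litmus test'' sentence. Concretely, assume $A = [k]$, set $m := |B|$ and $n := 1$ (so the mapping $i \mapsto \{e_{i,1}\}$ under consideration is just a function $[k]\to E$), and let $\phi(x_{1,1}, \dots, x_{k,1})$ be the formula provided by Lemma~\ref{lem:surj_hom} for these parameters. Existentially quantifying all the free variables yields an $\logic$-sentence $\Phi := \exists x_{1,1} \cdots \exists x_{k,1}\ \phi$. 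Now $\rel A \vDash \Phi$: take $e_{i,1} := i$ for each $i\in[k]$, so that the mapping $i \mapsto \{i\}$ is the identity function on $A$, which is contained in the (surjective) identity multi-homomorphism $\rel A \to \rel A$ — here we use that $|A| \le m$ is not needed, since actually we should choose $m := \max(|A|,|B|)$ to make Lemma~\ref{lem:surj_hom} applicable to both $\rel E = \rel A$ and $\rel E = \rel B$. Since $\AB$ is an $\LPMC$ template, $\rel B \vDash \Phi$, so there is a witnessing evaluation $e_{i,1} =: b_i \in B$ for $i\in[k]$ with $\rel B \vDash \phi(b_1,\dots,b_k)$; by Lemma~\ref{lem:surj_hom} applied to $\rel E = \rel B$, the mapping $i \mapsto \{b_i\}$ is contained in a surjective multi-homomorphism from $\rel A$ to $\rel B$, which is exactly what we wanted.

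The one point requiring care — and the main obstacle — is the bookkeeping around the parameter $m$ in Lemma~\ref{lem:surj_hom}: that lemma only guarantees the desired equivalence for structures $\rel E$ with $|E| \le m$, so $m$ must be chosen large enough to cover \emph{both} $\rel A$ and $\rel B$ simultaneously (it suffices to take $m := \max(|A|,|B|)$, or simply $m := |A| + |B|$), and one must check that the same single formula $\phi$ then works as a valid $\logic$-sentence test on both sides. Everything else is routine: the ``$\rel A \vDash \Phi$'' step uses only the identity multi-homomorphism, and the ``$\rel B \vDash \Phi \Rightarrow$ surjective multi-homomorphism exists'' step is a direct reading of Lemma~\ref{lem:surj_hom}. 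One could alternatively choose $n$ arbitrarily (e.g.\ $n=1$) since the existence of \emph{some} surjective multi-homomorphism does not depend on $n$; keeping $n=1$ keeps the sentence $\Phi$ as small as possible and avoids any additional multiplicity considerations.
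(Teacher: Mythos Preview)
Your proposal is correct and follows essentially the same route as the paper: the backward implication is the ``in particular'' clause of Lemma~\ref{lem:surj-homs-formulas}, and the forward implication existentially quantifies the formula of Lemma~\ref{lem:surj_hom} (with $m \ge |A|,|B|$), uses the identity on $\rel A$ to witness $\rel A \vDash \Phi$, and reads off a surjective multi-homomorphism from $\rel B \vDash \Phi$. Your self-correction on the choice of $m$ is exactly the one point the paper handles by stipulating $m \ge |A|,|B|$; the choice of $n$ is indeed irrelevant here.
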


\begin{proof}
For the forward implication, consider the sentence obtained by existentially quantifying all the variables in the formula $\phi$ provided by Lemma~\ref{lem:surj_hom} (with $m \geq |A|,|B|$). This sentence is true in $\rel A$ (as there exists a surjective multi-homomorphism from $\rel A$ to $\rel A$ -- the identity), so it must be true in $\rel B$, giving us a surjective multi-homomorphism from $\rel A$ to $\rel B$. 
The backward implication follows from Lemma~\ref{lem:surj-homs-formulas}.
\end{proof} 

An example which shows that one cannot replace in Proposition~\ref{prop:positive_templates} ``surjective multi-homomorphism'' by ``(multi-)homomorphism'' is the input formula $\varphi = \forall x \exists y R(x,y)$ (``there are no sinks'')
for a template where $\rel A$ is a digraph with no sinks and $\rel B$ is, say, $\rel A$ plus an isolated vertex. 

The following characterization of promise definability is also a straightforward consequence of Lemmata~\ref{lem:surj-homs-formulas} and \ref{lem:surj_hom}.

\begin{theorem} \label{thm:efao_definability}
  Let $\AB$ and $\CD$ be $\LPMC$ templates such that $A=C$ and $B=D$. Then $\CD$ is p-$\logic$-definable from $\AB$  if and only if $\SMultHom \AB \subseteq \SMultHom \CD$.  Moreover, in such a case, $\LPMC \CD \leq \LPMC \AB$.
\end{theorem}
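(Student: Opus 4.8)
\textbf{Proof plan for Theorem~\ref{thm:efao_definability}.}
The plan is to mirror the structure of the proof of Theorem~\ref{thm:eao_definability}, replacing the role of multi-homomorphisms by surjective multi-homomorphisms and the role of formula~(\ref{eq:multi_homo}) by the formula supplied by Lemma~\ref{lem:surj_hom}. As before, it suffices to prove the equivalence, since the reduction $\LPMC\CD \leq \LPMC\AB$ then follows from Theorem~\ref{thm:reduction}.

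For the forward implication, assume $\CD$ is p-$\logic$-definable from $\AB$, fix a relation symbol $Q$ of arity $n$ in the signature of $\rel C$ and $\rel D$, and let $f \in \SMultHom\AB$; I must show $f(\tuple a) \subseteq Q^{\rel D}$ for every $\tuple a \in Q^{\rel C} = S \subseteq S'$. By Definition~\ref{def:p-definability} there is an $\logic$-formula $\psi(v_1,\dots,v_n)$ defining $S'$ in $\rel A$ and defining some $T' \subseteq T = Q^{\rel D}$ in $\rel B$. Since $\tuple a \in S \subseteq S'$, we have $\rel A \vDash \psi(\tuple a)$, so Lemma~\ref{lem:surj-homs-formulas} applied to $f$ gives $\rel B \vDash \psi(\tuple b)$ for every $\tuple b \in f(\tuple a)$; hence $\tuple b \in T' \subseteq Q^{\rel D}$, as required. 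This direction is genuinely straightforward given Lemma~\ref{lem:surj-homs-formulas}.

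For the backward implication, assume $\SMultHom\AB \subseteq \SMultHom\CD$ and let $\sigma$ be the signature of $\rel A, \rel B$. Fix an $n$-ary symbol $Q$ in the signature of $\rel C, \rel D$; I need an $\logic$-formula $\psi(x_1,\dots,x_n)$ defining in $\rel A$ a relation containing $Q^{\rel C}$ and in $\rel B$ a relation contained in $Q^{\rel D}$. Assume $A = [k]$ and set $m := \max(|A|,|B|)$. Apply Lemma~\ref{lem:surj_hom} with these $k,m$ and this $n$ to obtain the formula $\phi(x_{1,1},\dots,x_{k,n})$; then for any structure $\rel E$ similar to $\rel A$ with $|E| \le m$, $\rel E \vDash \phi(e_{1,1},\dots,e_{k,n})$ iff the map $i \mapsto \{e_{i,1},\dots,e_{i,n}\}$ is contained in a surjective multi-homomorphism from $\rel A$ to $\rel E$. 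For each $\tuple a \in A^n$, let $\tau_{\tuple a}(x_1,\dots,x_n)$ be obtained from $\phi$ by renaming $x_{a_i,i}$ to $x_i$ for each $i \in [n]$ and existentially quantifying all remaining variables; then (for $\rel E \in \{\rel A, \rel B\}$, both of size $\le m$) $\tau_{\tuple a}$ defines in $\rel E$ the set of those $\tuple e \in E^n$ such that $\tuple e \in f(\tuple a)$ for some $f \in \SMultHom(\rel A, \rel E)$ --- here I should note that enlarging the tuple $\tuple a$ to a full assignment of the $x_{i,j}$ and then projecting exactly realizes ``$\tuple e \in f(\tuple a)$ for some surjective multi-homomorphism $f$'', using that any multi-valued map contained in a surjective multi-homomorphism extends back to one. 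Finally take $\psi := \bigvee_{\tuple a \in Q^{\rel C}} \tau_{\tuple a}$. In $\rel A$, the identity is a surjective multi-homomorphism $\rel A \to \rel A$, so $\tau_{\tuple a}$ holds of $\tuple a$ itself and thus $\psi$ defines in $\rel A$ a relation containing $Q^{\rel C}$. In $\rel B$, if $\rel B \vDash \psi(\tuple b)$ then $\tuple b \in f(\tuple a)$ for some $\tuple a \in Q^{\rel C}$ and some $f \in \SMultHom(\rel A,\rel B) \subseteq \SMultHom(\rel C,\rel D)$; since $\tuple a \in Q^{\rel C}$ and $f$ is a multi-homomorphism from $\rel C$ to $\rel D$, we get $f(\tuple a) \subseteq Q^{\rel D}$, so $\tuple b \in Q^{\rel D}$. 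Hence $\psi$ defines in $\rel B$ a relation contained in $Q^{\rel D}$, completing the argument.

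The main obstacle I anticipate is the bookkeeping in the backward direction: making precise that the projected/renamed formula $\tau_{\tuple a}$ captures exactly ``$\tuple e$ lies in the image $f(\tuple a)$ for some surjective multi-homomorphism $f$'' on \emph{both} $\rel A$ and $\rel B$ simultaneously (this is why the single bound $m = \max(|A|,|B|)$ is needed in Lemma~\ref{lem:surj_hom}), and verifying that ``contained in a surjective multi-homomorphism'' in the conclusion of Lemma~\ref{lem:surj_hom} can be upgraded to ``equals some surjective multi-homomorphism'' for the purpose of computing the defined relation --- which holds because any multi-valued function contained in a surjective multi-homomorphism $f$ can be re-enlarged coordinatewise back to $f$, so the union over ``contained in'' coincides with the union over ``is''. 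Everything else is a routine adaptation of the $\eao$ argument.
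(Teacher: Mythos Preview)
Your proposal is correct and follows essentially the same approach as the paper's own proof, which simply says to repeat the argument of Theorem~\ref{thm:eao_definability} with Lemma~\ref{lem:surj-homs-formulas} replacing Lemma~\ref{lem:homs-formulas} and the formula from Lemma~\ref{lem:surj_hom} replacing~(\ref{eq:multi_homo}). You have in fact supplied more detail than the paper does: your explicit choice $m = \max(|A|,|B|)$ and your remark that ``contained in a surjective multi-homomorphism'' and ``lies in $f(\tuple a)$ for some surjective multi-homomorphism $f$'' yield the same union are exactly the bookkeeping points the paper leaves implicit.
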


\begin{proof}
The theorem is proved in the same way as Theorem~\ref{thm:eao_definability}; using Lemma~\ref{lem:surj-homs-formulas} instead of Lemma~\ref{lem:homs-formulas} for the forward implication, and the formula provided by Lemma~\ref{lem:surj_hom} instead of  (\ref{eq:multi_homo}) for the backward implication.
\end{proof}

\subsection{Membership} \label{subsec:positive-membership}

Clearly, every $\LMC$, as well as $\LPMC$, is in $\PSPACE$. We now give a generalization of the remaining membership results from~\cite{Madelaine2011tetrachotomy} using an appropriate generalization of ``A-shops'' and ``E-shops'' from that paper. We say that a surjective multi-homomorphism $f$ from $\rel A$ to $\rel B$ is an $\ashop$ if there exists
$a^* \in A$ such that $f(a^*) = B$. We also say that $\AB$ admits an $\ashop$ in such a case. 
We call $f$ an $\eshop$ if $f^{-1}(b^*) = A$ for some $b^* \in B$. Finally, we call $f$ an $\aeshop$ if it is simultaneously an $\ashop$ and an $\eshop$.

An additional simple reduction will be useful in the proof of the membership result (Theorem~\ref{thm:tractability}) and later as well. We say that an $\LPMC$ template $(\rel C,\rel D)$ is a \emph{relaxation} of an $\LPMC$ template $\AB$ if $(\rel C,\rel A)$ and $(\rel B,\rel D)$ are $\LPMC$ templates. Recall that, by Proposition~\ref{prop:positive_templates}, the property is equivalent to the existence of surjective multi-homomorphisms from $\rel C$ to $\rel A$ and from $\rel B$ to $\rel D$. 

\begin{proposition} \label{prop:relaxation}
  Let $\AB$ and $(\rel C,\rel D)$ be $\LPMC$ templates. If $(\rel C,\rel D)$ is a relaxation of $\AB$, then $\LPMC(\rel C,\rel D) \leq \LPMC \AB$.
\end{proposition}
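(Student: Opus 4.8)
The plan is to reduce $\LPMC(\rel C,\rel D)$ to $\LPMC\AB$ by a reduction that does \emph{not change the input sentence} at all. Since $(\rel C,\rel D)$ is a relaxation of $\AB$, by Proposition~\ref{prop:positive_templates} there is a surjective multi-homomorphism $g$ from $\rel C$ to $\rel A$ and a surjective multi-homomorphism $h$ from $\rel B$ to $\rel D$. The key point is that $\rel C$, $\rel D$, $\rel A$, $\rel B$ are all in the same signature, so any $\logic$-sentence $\phi$ over that signature is simultaneously a valid input for $\LPMC(\rel C,\rel D)$ and for $\LPMC\AB$. Thus the candidate reduction is simply $\phi \mapsto \phi$, and all that remains is to check that it sends $\yes$-instances to $\yes$-instances and $\no$-instances to $\no$-instances.

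The correctness check has two symmetric halves, each an immediate application of Lemma~\ref{lem:surj-homs-formulas}. First, suppose $\phi$ is a $\yes$-instance of $\LPMC(\rel C,\rel D)$, i.e.\ $\rel C \vDash \phi$. Applying Lemma~\ref{lem:surj-homs-formulas} to the surjective multi-homomorphism $g$ from $\rel C$ to $\rel A$ gives $\rel A \vDash \phi$, so $\phi$ is a $\yes$-instance of $\LPMC\AB$. Second, suppose $\phi$ is a $\no$-instance of $\LPMC\AB$, i.e.\ $\rel B \not\vDash \phi$; we must show $\rel D \not\vDash \phi$. Equivalently, assuming $\rel D \vDash \phi$ we derive $\rel B \vDash \phi$ — but wait, the surjective multi-homomorphism runs from $\rel B$ to $\rel D$, so Lemma~\ref{lem:surj-homs-formulas} gives the implication $\rel B \vDash \phi \Rightarrow \rel D \vDash \phi$, which is the contrapositive of what we want: $\rel D \not\vDash \phi \Rightarrow \rel B \not\vDash \phi$. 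Hence $\rel D \not\vDash \phi$ would need to be derived, not assumed. Let me restate cleanly: from $\rel B \not\vDash \phi$ and the contrapositive of ``$\rel B \vDash \phi \Rightarrow \rel D \vDash \phi$'' we cannot conclude anything. The correct reading is the other direction of the reduction condition — we need that $\rel B \vDash \phi$ (for the image instance) implies $\rel D \vDash \phi$ (for the source instance), which is exactly Lemma~\ref{lem:surj-homs-formulas} applied to $h : \rel B \to \rel D$. So the $\no$-instance condition ``$\rel B \vDash \psi$ implies $\rel D \vDash \phi$'' from the definition of $\leq$ holds verbatim.

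Putting it together: the identity transformation is a logarithmic-space (indeed trivial) map, it sends each $\phi$ with $\rel C \vDash \phi$ to a $\phi$ with $\rel A \vDash \phi$ (via $g$ and Lemma~\ref{lem:surj-homs-formulas}), and it satisfies the $\no$-instance requirement because $\rel B \vDash \phi$ implies $\rel D \vDash \phi$ (via $h$ and Lemma~\ref{lem:surj-homs-formulas}). This is precisely the definition of $\LPMC(\rel C,\rel D) \leq \LPMC\AB$.

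There is no real obstacle here — the statement is essentially a repackaging of Lemma~\ref{lem:surj-homs-formulas} together with Proposition~\ref{prop:positive_templates}. The only thing to be careful about is the direction of the two multi-homomorphisms relative to the two halves of the reduction condition, and the fact that all four structures share a signature so that ``the same sentence'' is a meaningful reduction. One could alternatively phrase this via Theorem~\ref{thm:efao_definability} — observing that $(\rel C,\rel D)$ is p-$\logic$-definable from $\AB$ with each relation symbol $Q$ witnessed by the atomic formula $Q(v_1,\dots,v_n)$ itself, using $S' = Q^{\rel A} \supseteq Q^{\rel C}$ and $T' = Q^{\rel B} \subseteq Q^{\rel D}$, which hold exactly because $g$ and $h$ are surjective multi-homomorphisms. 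But the direct argument via Lemma~\ref{lem:surj-homs-formulas} is shorter and self-contained, so that is the route I would write up.

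\begin{proof}
Since $(\rel C,\rel D)$ is a relaxation of $\AB$, Proposition~\ref{prop:positive_templates} gives a surjective multi-homomorphism $g$ from $\rel C$ to $\rel A$ and a surjective multi-homomorphism $h$ from $\rel B$ to $\rel D$. As $\rel A,\rel B,\rel C,\rel D$ are all in the same signature, every $\logic$-sentence $\phi$ in that signature is a valid input for both $\LPMC(\rel C,\rel D)$ and $\LPMC\AB$; we claim that the identity map $\phi \mapsto \phi$ is a valid reduction $\LPMC(\rel C,\rel D) \leq \LPMC\AB$. It is certainly computable in logarithmic space. If $\phi$ is a $\yes$-instance of $\LPMC(\rel C,\rel D)$, that is $\rel C \vDash \phi$, then applying Lemma~\ref{lem:surj-homs-formulas} to $g$ yields $\rel A \vDash \phi$, so $\phi$ is a $\yes$-instance of $\LPMC\AB$. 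For the $\no$-instance condition, suppose $\rel B \vDash \phi$; applying Lemma~\ref{lem:surj-homs-formulas} to $h$ yields $\rel D \vDash \phi$. Thus $\rel B \not\vDash \phi$ implies $\rel D \not\vDash \phi$, i.e.\ every $\no$-instance of $\LPMC\AB$ is a $\no$-instance of $\LPMC(\rel C,\rel D)$. This establishes the reduction.
\end{proof}
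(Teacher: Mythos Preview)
Your approach is exactly the paper's: the trivial reduction $\phi \mapsto \phi$. The paper's write-up is one line shorter because it uses the definition of relaxation directly --- $(\rel C,\rel A)$ and $(\rel B,\rel D)$ are $\LPMC$ templates, so by \emph{definition} $\rel C \vDash \phi \Rightarrow \rel A \vDash \phi$ and $\rel B \vDash \phi \Rightarrow \rel D \vDash \phi$ --- whereas you pass through Proposition~\ref{prop:positive_templates} and Lemma~\ref{lem:surj-homs-formulas} to get the same implications. That's fine.

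There is, however, a slip in your last two sentences. From $\rel B \vDash \phi \Rightarrow \rel D \vDash \phi$ the contrapositive is $\rel D \not\vDash \phi \Rightarrow \rel B \not\vDash \phi$, i.e.\ every $\no$-instance of $\LPMC(\rel C,\rel D)$ is a $\no$-instance of $\LPMC\AB$ --- which is exactly what the reduction needs. You wrote it the wrong way round (``$\rel B \not\vDash \phi$ implies $\rel D \not\vDash \phi$'', and ``every $\no$-instance of $\LPMC\AB$ is a $\no$-instance of $\LPMC(\rel C,\rel D)$''). The preceding sentence already established the correct implication, so the argument is sound; just fix the restatement.

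One more remark: the alternative route you sketch via Theorem~\ref{thm:efao_definability} does not work in general, since that theorem (and Definition~\ref{def:p-definability}) require $A=C$ and $B=D$, which a relaxation need not satisfy; in particular ``$Q^{\rel A} \supseteq Q^{\rel C}$'' is not even well-typed when $A \neq C$.
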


\begin{proof} 
The trivial reduction, which does not change the input, works. Indeed, $\yes$-instances of $\LPMC \CD$ are $\yes$-instances of $\LPMC \AB$ since $(\rel C,\rel A)$ is an $\LPMC$ template, and $\no$-instances of $\LPMC \CD$ are $\no$-instances of $\LPMC \AB$ since $(\rel B,\rel D)$ is an $\LPMC$ template.
\end{proof}

\begin{theorem} \label{thm:tractability}
Let $\AB$ be an $\LPMC$ template. Then the following holds.
\begin{enumerate}
    \item If $\AB$ admits an $\ashop$, then $\LPMC \AB$ is in $\NP$.
    \item If $\AB$ admits an $\eshop$, then $\LPMC \AB$ is in $\coNP$.
    \item If $\AB$ admits an $\aeshop$, then $\LPMC \AB$ is in $\Lspace$.
\end{enumerate}
\end{theorem}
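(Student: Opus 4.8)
\textbf{Proof plan for Theorem~\ref{thm:tractability}.}
The plan is to exploit the witness-function viewpoint from the previous subsection together with the shop structure to bound the amount of information an algorithm must guess or verify. Let $\phi$ be an input $\logic$-sentence, which we may assume is in the form~(\ref{eq:prenex}) with no free variables, say $\phi = \forall y_1 \exists z_1 \dots \forall y_m \exists z_m \ \phi'(\tuple{y},\tuple{z})$. Recall that $\rel A \vDash \phi$ iff there are witnesses $\alpha_1 \colon A \to A, \dots, \alpha_m \colon A^m \to A$, and dually $\rel B \not\vDash \phi$ iff for \emph{some} choice of the $\forall$-variables there is no valid continuation; equivalently (by swapping quantifiers in the negation) iff there are ``co-witnesses'' $\beta_i$ for the $\exists$-player's failure. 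The key point is that on a $\yes$-instance we get a witnessed $\rel A \vDash \phi$, and on a $\no$-instance we get a witnessed failure in $\rel B$; an $\ashop$ $f$ with $f(a^*) = B$ will let us convert an $\rel A$-witness into a polynomially-checkable certificate, giving membership in $\NP$, and dually an $\eshop$ gives $\coNP$.

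\textbf{Part (1).} Suppose $\AB$ admits an $\ashop$ $f$ with $f(a^*)=B$. I would guess witnesses $\alpha_1,\dots,\alpha_m$ for $\rel A \vDash \phi$ — these are functions on domains $A^i$, hence objects of size polynomial in $|A|$ and thus \emph{constant} size, so the guess is trivially polynomial — and accept iff they are genuine witnesses, i.e. iff $\rel A \vDash \phi'(\tuple c,\alpha_1(c_1),\dots,\alpha_m(c_1,\dots,c_m))$ for all $\tuple c \in A^m$, which is a constant-size check. On a $\yes$-instance such $\alpha_i$ exist, so the algorithm accepts. On a $\no$-instance we must argue it rejects: if it accepted, there would be witnesses for $\rel A \vDash \phi$, hence $\rel A \vDash \phi$; but then, composing with the $\ashop$ exactly as in the proof of Lemma~\ref{lem:surj-homs-formulas} (take $g\colon B\to A$ with $b\in f(g(b))$; here the extra leverage of $f(a^*)=B$ is not even needed for this direction — Lemma~\ref{lem:surj-homs-formulas} already gives $\rel B \vDash \phi$ from any surjective multi-homomorphism), contradicting $\rel B \not\vDash \phi$. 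Wait — this shows $\LPMC\AB$ is decidable in constant time whenever \emph{any} surjective multi-homomorphism exists, which is absurd. The resolution: on a $\no$-instance the algorithm may still find spurious $\alpha_i$ making $\rel A \vDash \phi$ false-positively? No: witnesses genuinely certify $\rel A \vDash \phi$. The actual subtlety is that we are not promised $\rel A \vDash \phi \lor \rel B \not\vDash \phi$ covers all inputs, so ``$\rel A \vDash \phi$'' on a non-promise input is simply possible while $\rel B \vDash \phi$ too — fine for a promise problem only if $\no$-instances never satisfy $\rel A \vDash \phi$, which holds since $\AB$ is a template. So in fact Part (1) as I set it up would show constant-time decidability, which is wrong; hence the $\ashop$ must genuinely be needed, and the correct certificate is \emph{not} an $\rel A$-witness but something tied to $\rel B$. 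The right approach: guess, for the \emph{negation viewpoint}, a strategy for the $\exists$-player in $\rel B$ — but $\rel B \vDash \phi$ is a $\PSPACE$ statement in general, so an unrestricted guess does not work; the $\ashop$ collapses the $\forall$-branching. Concretely, using $f(a^*)=B$: in $\rel B$, whenever a universally quantified variable must range over all of $B$, it suffices to consider the single value that $f$ assigns from $a^*$, reducing the universal branching to one branch, so that verifying $\rel B\vDash\phi$ becomes checking one play, which is $\NP$ (guess the $\exists$-responses along that single play, of polynomial total size, and verify); and on a $\no$-instance this single-play check fails. Dually for Part (2) with an $\eshop$, collapsing the $\exists$-branching and landing in $\coNP$. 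Part (3) follows by combining: an $\aeshop$ collapses both, leaving a single constant-size play to evaluate, doable in $\Lspace$ (indeed constant time after reading the input).

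\textbf{Main obstacle.} The delicate point — and where I would spend the most care — is exactly the issue flagged above: making precise how an $\ashop$ lets one replace ``$\forall y$'' by a \emph{single} representative value while preserving equivalence with $\rel B \vDash \phi$ \emph{on $\no$-instances} (soundness), since completeness on $\yes$-instances is immediate from Lemma~\ref{lem:surj-homs-formulas}. The clean statement to prove is: if $\AB$ admits an $\ashop$ with $f(a^*)=B$, then $\rel B \vDash \phi$ iff the ``restricted'' game in which each $\forall$-move is forced to a fixed $b^\circ$ (any element of $B$, via $a^*$) is won by $\exists$; one direction is trivial and the other uses that any $\exists$-strategy against the restricted adversary can be transported, via $f$ and a section $g$ as in Lemma~\ref{lem:surj-homs-formulas}, to a strategy against the full adversary — but only because every $\forall$-value lies in $f(a^*)$. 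Once this ``shop collapses one quantifier type'' lemma is nailed down, the $\NP$/$\coNP$/$\Lspace$ bounds are routine: guess the surviving polynomial-size (in fact constant-size) play and verify it against the quantifier-free part $\phi'$.
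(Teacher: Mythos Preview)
Your proposal has a genuine gap in Part~(1). The ``clean statement'' you aim for --- that $\rel B \vDash \phi$ iff the restricted game \emph{in $\rel B$} with each $\forall$-move forced to a fixed $b^\circ$ is won by $\exists$ --- is false, and the $\ashop$ does not rescue it. Concretely, take $\rel A = ([3]; \{2\})$, $\rel B = ([2]; \{1\})$ with a single unary symbol $R$, and $f(1)=\{1,2\}$, $f(2)=\{1\}$, $f(3)=\{2\}$; this $f$ is an $\ashop$ with $a^*=1$. For $\phi = \forall y\,\exists z\ R(y)$ and $b^\circ=1$, the restricted game in $\rel B$ is won (since $1\in R^{\rel B}$), yet $\rel B \not\vDash \phi$, so your algorithm accepts a $\no$-instance. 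The transport you sketch via $f$ and a section $g$ cannot work: $f$ carries information \emph{from $\rel A$ to $\rel B$}, not within $\rel B$. (A secondary slip: the full witnesses $\alpha_i\colon A^i\to A$ are not constant-size; their tables have $|A|^i$ entries with $i$ up to $m$, hence exponential in the input --- this is exactly why a quantifier collapse is needed at all.)

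The paper's fix is to play the restricted game in $\rel A$, not $\rel B$: guess $\tuple{a}\in A^m$ and accept iff $\rel A \vDash \phi'(a^*,\dots,a^*,\tuple{a})$. Completeness is immediate from $\rel A \vDash \phi$. For soundness, the $\ashop$ now does the work you wanted: since $f(a^*)=B$, Lemma~\ref{lem:homs-formulas} gives $\rel B \vDash \phi'(\tuple{d},\tuple{b})$ for \emph{every} $\tuple{d}\in B^m$ and any $\tuple{b}\in f(\tuple{a})$, so constant witnesses establish $\rel B \vDash \phi$. In the counterexample above, this asks whether $1\in R^{\rel A}=\{2\}$, which correctly fails.

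For Part~(3), ``collapse both quantifiers'' does not go through directly in the promise setting: the $\aeshop$ goes from $\rel A$ to $\rel B$, so the $\forall$-collapse (to $a^*$) lives in $\rel A$ while the $\exists$-collapse (to $b^*$) lives in $\rel B$, and there is no single structure in which to evaluate $\phi'$ at a single play. The paper acknowledges this and takes a different route: it builds an intermediate structure $\rel C$ on $B$ (the image of $\rel A$ under a particular $f'\subseteq f$) such that $(\rel C,\rel C)$ itself admits an $\aeshop$ and $\AB$ is a relaxation of $(\rel C,\rel C)$; the claim then follows from Proposition~\ref{prop:relaxation} and the known non-promise $\Lspace$ result.
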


\begin{proof} 
For the first item, let $f$ be an $\ashop$ from $\rel A$ to $\rel B$ with $f(a^*) = B$, and consider an input $\phi$ in the special form (\ref{eq:prenex}), i.e., $\phi = \forall y_1 \exists z_1 \forall y_2 \exists z_2 \ldots \forall y_m \exists z_m \ \phi'(\tuple{y}, \tuple{z})$, where $\phi'$ is quantifier-free. We answer $\yes$ if there exists $\tuple{a} \in A^m$ such that $\rel A \vDash \phi'(a^*, a^*, \dots, a^*, \tuple{a})$; this can be clearly decided in $\NP$. It is clear that the answer is $\yes$ whenever $\phi$ is a $\yes$-instance of $\LPMC \AB$. On the other hand, if $\rel A \vDash \phi'(a^*, \dots, a^*, \tuple{a})$, then any functions $\beta_1: B \to B$, \dots, $\beta_m: B^m \to B$ such that $\beta_i(b_1, \dots, b_i) \in f(a_i)$ (for all $i \in [m]$ and $b_1, \dots, b_m \in B$) provide witnesses for $\rel B \vDash \phi$ by Lemma~\ref{lem:homs-formulas}. Therefore, if $\phi$ is a $\no$-instance, then the answer is $\no$, as needed. 

The second item follows by the duality argument.

In the case $\rel A = \rel B$, the third item can be proved in an analogous way (by eliminating both quantifiers instead of just one), see Corollary 9 in \cite{Madelaine2011tetrachotomy}. For the general case, we will construct $\rel C$ such that there is an $\aeshop$ from $\rel C$ to $\rel C$ and there are surjective multi-homomorphisms from $\rel A$ to $\rel C$ and from $\rel C$ to $\rel B$. Then $\AB$ will be a relaxation of $(\rel C,\rel C)$ by Proposition~\ref{prop:positive_templates}, and then membership of $\LPMC \AB$ in $\Lspace$ will follow from Proposition~\ref{prop:relaxation} and the mentioned Corollary 9 in \cite{Madelaine2011tetrachotomy}. Let $f$ be an $\aeshop$ from $\rel A$ to $\rel B$ with $f(a^*)=B$ and $f^{-1}(b^*)=A$, and define a surjective multi-valued function $f'$ from $A$ to $B$ by $f'(a^*) = B$ and $f'(a)=\{b^*\}$ if $a \neq a^*$. Note that $f'$ is contained in $f$, so $f'$ is a surjective multi-homomorphism from $\rel A$ to $\rel B$. We define $\rel C$ as the ``image'' of $\rel A$ under $f'$, that is, $C = B$ and $R^{\rel C} = \cup_{\tuple{a} \in R^{\rel A}} f'(\tuple{a})$ for each relation symbol $R$. Clearly, $f'$ is a surjective multi-homomorphism from $\rel A$ to $\rel C$ and the identity is a surjective homomorphism from $\rel C$ to $\rel B$. It remains to find an $\aeshop$ from $\rel C$ to $\rel C$. We claim that $g$ defined by $g(b^*) = \{b^*\}$ and $g(c)=C$ for $c \neq b^*$ is such an $\aeshop$. Indeed, if $\tuple{c} \in R^{\rel C}$, then $\tuple{c} \in f'(\tuple{a})$ for some $\tuple{a} \in R^{\rel A}$. By the definition of $f'$, we necessarily have $a_i=a^*$ whenever $c_i \neq b^*$; therefore, $f'(\tuple{a}) \supseteq g(\tuple{c})$. But $f'(\tuple{a}) \subseteq R^{\rel C}$ as $f' \in \SMultHom(\rel A,\rel C)$, and we are done.
\end{proof}

These membership results together with the (more involved) hardness results were sufficient for the tetrachotomy in~\cite{Madelaine2011tetrachotomy}. One problem with generalizing this tetrachotomy  is that, unlike in the non-promise setting, an $\LPMC$ template can admit an $\ashop$ and an $\eshop$, but no $\aeshop$. 
However, such a situation cannot happen for digraphs.

\begin{proposition} \label{prop:digraphs}
Let $\AB$ be an $\LPMC$ template such that $\rel A$ and $\rel B$ are digraphs. If $\AB$ admits an $\ashop$ and an $\eshop$, then it admits an $\aeshop$. 
\end{proposition}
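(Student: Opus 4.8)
Let $\AB$ be an $\LPMC$ template where $\rel A,\rel B$ are digraphs, with edge relations $E^{\rel A}$, $E^{\rel B}$; suppose $\AB$ admits an $\ashop$ $f$ (so $f(a^*)=B$ for some $a^*\in A$) and an $\eshop$ $g$ (so $g^{-1}(b^*)=A$ for some $b^*\in B$). The goal is to produce a single surjective multi-homomorphism $h\colon\rel A\to\rel B$ which is simultaneously an $\ashop$ and an $\eshop$. The natural candidate is to ``compose'' the information in $f$ and $g$ into one map, or to directly guess a two-element-image map and verify it is a multi-homomorphism. Concretely, I would try $h$ defined by $h(a^*) = B$ and $h(a) = \{b^*\}$ for $a\neq a^*$: this is automatically an $\ashop$ (since $h(a^*)=B$) and an $\eshop$ (since $b^*\in h(a)$ for every $a$, so $h^{-1}(b^*)=A$), and it is surjective. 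So the entire content is to show that \emph{this $h$ is a multi-homomorphism from $\rel A$ to $\rel B$}, i.e.\ that for every edge $(a,a')\in E^{\rel A}$ we have $h(a)\times h(a')\subseteq E^{\rel B}$.

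**Key steps.** First, using that $g$ is a multi-homomorphism with $g^{-1}(b^*)=A$: for any edge $(a,a')\in E^{\rel A}$, since $b^*\in g(a)$ and $b^*\in g(a')$, we get $(b^*,b^*)\in E^{\rel B}$. This handles the edge constraint whenever both endpoints of the $h$-image are $b^*$ — i.e.\ the case $a\neq a^*$, $a'\neq a^*$. Second, using that $f$ is a multi-homomorphism with $f(a^*)=B$: for any edge $(a^*,a')\in E^{\rel A}$ out of $a^*$, we get $B\times f(a')\subseteq E^{\rel B}$, in particular $B\times\{b^*\}\subseteq E^{\rel B}$ if $b^*\in f(a')$ — but we don't control $f(a')$, so instead I'd argue: since $f(a^*)=B$ and $b^*\in B=f(a^*)$, and we need $B$ on the \emph{target} side. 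Here the asymmetry of digraph edges bites. The correct move for the case $a=a^*$ (edge $(a^*,a')$, image $B\times h(a')$, and $h(a')=\{b^*\}$ unless $a'=a^*$): we need $B\times\{b^*\}\subseteq E^{\rel B}$. We know $(b^*,b^*)\in E^{\rel B}$ from the $\eshop$ step. To upgrade the first coordinate from $b^*$ to all of $B$, apply $f$ to the edge $(a^*,a')$ only if we also know $b^* \in f(a')$; lacking that, instead consider that $(a^*, a^*)$ may or may not be a loop. The clean resolution is to use the $\ashop$ $f$ on edges \emph{into} $a^*$ and the structure more carefully, or — more robustly — to prove the statement by first reducing to a relaxation: build an intermediate template as in the proof of Theorem~\ref{thm:tractability}(3) and transfer. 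I would reduce $f$ to the contained map $f'$ with $f'(a^*)=B$, $f'(a)=\{b^*\}$ else, and $g$ to $g'$ with $g'(b')=\{b^*\}$... wait, $g$ maps $A\to B$; reduce $g$ to $g'$ contained in $g$ with $g'(a)\ni b^*$ always, chosen minimally. Then $h$ above should coincide with $f'$, and the task is exactly to show $f'\in\SMultHom\AB$, for which I combine: edges not touching $a^*$ are handled by $g'$ (giving $(b^*,b^*)\in E^{\rel B}$); edges $(a^*,a')$ with $a'\neq a^*$ need $B\times\{b^*\}\subseteq E^{\rel B}$; edges $(a,a^*)$ with $a\neq a^*$ need $\{b^*\}\times B\subseteq E^{\rel B}$; the loop $(a^*,a^*)$ if present needs $B\times B\subseteq E^{\rel B}$. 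Each of these ``rectangle'' conditions must be extracted from $f$ being a multi-homomorphism with $f(a^*)=B$: an edge $(a^*,a')\in E^{\rel A}$ gives $B\times f(a')\subseteq E^{\rel B}$, so $\{b\}\times f(a')\subseteq E^{\rel B}$ for all $b$; since $f(a')\neq\emptyset$, pick $b'\in f(a')$, getting $B\times\{b'\}\subseteq E^{\rel B}$. Now combine with the $\eshop$: also $(b^*,b^*)\in E^{\rel B}$ and more; the two pieces of information $B\times\{b'\}\subseteq E^{\rel B}$ and $\{b^*\}\times B\subseteq E^{\rel B}$ (the latter from an in-edge) should be patched using a transitivity-like argument specific to the binary/digraph case — this is precisely where ``digraphs'' (arity exactly $2$) is essential and the general-arity statement fails.

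**Main obstacle.** The crux is exactly the patching in the last step: from ``$f(a^*)=B$ applied to out-edges and in-edges of $a^*$'' one gets, for each relevant neighbour, a \emph{full row} $B\times\{b'\}$ or a full \emph{column} $\{b^*\}\times B$ inside $E^{\rel B}$, and one must deduce the specific rectangles $B\times\{b^*\}$, $\{b^*\}\times B$, and possibly $B\times B$. I expect this requires a short combinatorial argument on $E^{\rel B}$ exploiting that $(a^*$ has at least one out-edge or the analysis is vacuous$)$ together with the $\eshop$ giving $b^*$-related edges, and crucially the fact that in a digraph ``an edge'' is a single binary tuple so these row/column intersections behave well — for relations of higher arity the analogous intersections of ``hyperplanes'' need not contain the required subcube, which is why Proposition~\ref{prop:digraphs} is stated only for digraphs. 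I would organise the final proof as a case analysis on whether $a^*$ has out-neighbours, in-neighbours, or a loop in $\rel A$, showing in each case that the needed rectangle sits in $E^{\rel B}$, and noting that if $a^*$ is isolated in $\rel A$ then only the $(b^*,b^*)\in E^{\rel B}$ condition from the $\eshop$ is needed and $f'$ is trivially a multi-homomorphism.
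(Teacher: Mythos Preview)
Your candidate $h$ --- with $h(a^*)=B$ and $h(a)=\{b^*\}$ for all $a\neq a^*$ --- is \emph{not} always a multi-homomorphism, and the ``short combinatorial argument'' you hope will force the required rectangles into $E^{\rel B}$ does not exist. Here is a two-element counterexample: take $A=B=\{1,2\}$, $E^{\rel A}=\{(2,1)\}$, $E^{\rel B}=\{(1,1),(2,1),(2,2)\}$, $a^*=1$, $b^*=1$. Then $f$ given by $f(1)=\{1,2\}$, $f(2)=\{2\}$ is an $\ashop$ (check $f(2)\times f(1)=\{(2,1),(2,2)\}\subseteq E^{\rel B}$), and $g$ given by $g(1)=\{1\}$, $g(2)=\{1,2\}$ is an $\eshop$ (check $g(2)\times g(1)=\{(1,1),(2,1)\}\subseteq E^{\rel B}$ and $1\in g(a)$ for all $a$). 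But your $h$ sends the edge $(2,1)$ to $\{1\}\times\{1,2\}$, and $(1,2)\notin E^{\rel B}$. No ``row/column patching'' can save this: the only information available is $\{2\}\times B\subseteq E^{\rel B}$ (from $f$) and $\{(1,1),(2,1)\}\subseteq E^{\rel B}$ (from $g$), neither of which forces $(1,2)\in E^{\rel B}$.

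The fix, which is what the paper does, is to \emph{not insist} that the constant value be $b^*$. When $a^*$ has, say, only incoming edges, one fixes an in-edge $(a_1,a^*)\in E^{\rel A}$, picks any $b_1\in f(a_1)$, and sets $h(a)=\{b_1\}$ for $a\neq a^*$; then the needed rectangle $\{b_1\}\times B$ lies in $f(a_1)\times f(a^*)\subseteq E^{\rel B}$ directly (in the example above this gives $b_1=2$ and the map works). When $a^*$ has both in- and out-neighbours there is a further twist: one may need to move the centre from $a^*$ to some other vertex $a_3$ lacking out-edges (or in-edges), again choosing the constant value from $f$. Your fixed $h$ succeeds only in the residual sub-case where \emph{every} vertex of $\rel A$ has both an in- and an out-edge; there, surjectivity of $g$ together with $b^*\in g(a)$ for all $a$ does force $B\times\{b^*\}\cup\{b^*\}\times B\subseteq E^{\rel B}$, which is exactly the step you were searching for but could not pin down in general --- because in general it is false.
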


\begin{proof}
See Appendix~\ref{append}.
\end{proof}

\subsection{Hardness}

As a consequence of Theorems~\ref{thm:hypergraph-hardness} and \ref{thm:efao_definability}, we obtain the following hardness result.

\begin{theorem} \label{thm:efao-hardness}
Let $\AB$ be an $\LPMC$ template.
\begin{enumerate}
    \item If there is no $\eshop$ from $\rel A$ to $\rel B$, then $\LPMC \AB$ is $\NP$-hard.
    \item If there is no $\ashop$ from $\rel A$ to $\rel B$, then $\LPMC \AB$ is $\coNP$-hard.
\end{enumerate}   
\end{theorem}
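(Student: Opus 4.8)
\textbf{Proof proposal for Theorem~\ref{thm:efao-hardness}.}

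The plan is to reduce from the NP-complete (resp.\ coNP-complete) problems supplied by Theorem~\ref{thm:hypergraph-hardness}, using the promise-definability machinery of Theorem~\ref{thm:efao_definability} together with the relaxation reduction of Proposition~\ref{prop:relaxation}. I will do item~1 in detail; item~2 then follows by the duality argument, since negating an $\logic$-sentence and complementing swaps $\exists \leftrightarrow \forall$, $\land \leftrightarrow \lor$, turns $\NP$-hardness into $\coNP$-hardness, and turns ``no $\eshop$ from $\rel A$ to $\rel B$'' into ``no $\ashop$ from $\compl{\rel B}$ to $\compl{\rel A}$''.

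For item~1, assume there is no $\eshop$ from $\rel A$ to $\rel B$. The goal is to find an $\LPMC$ template $(\rel C,\rel D)$ with $\LPMC(\rel C,\rel D)$ NP-hard that reduces to $\LPMC\AB$. The natural candidate for the hard seed is $(\rel C,\rel D) = ((A;\rb_A^{2|A|}),(B;\nae_B^{2|A|}))$ from Theorem~\ref{thm:hypergraph-hardness}; note $\ea\mbox{-}\PMC$ of this template is NP-complete, and since $\ea \subseteq \efao$, the $\efao$-version is NP-hard as well (every $\ea$-sentence is an $\logic$-sentence, so the identity map is a reduction $\ea\mbox{-}\PMC \leq \efao\mbox{-}\PMC$ of the same template, provided that template is still an $\LPMC$ template for $\logic=\efao$; this holds because the rainbow relation admits a surjective multi-homomorphism to the not-all-equal relation, e.g.\ via any surjective multi-valued $f\colon A\to B$). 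Now I want to invoke Theorem~\ref{thm:efao_definability}: it suffices to show $\SMultHom\AB \subseteq \SMultHom\big((A;\rb_A^{2|A|}),(B;\nae_B^{2|A|})\big)$, i.e.\ that every surjective multi-homomorphism $f$ from $\rel A$ to $\rel B$ sends $\rb_A^{2|A|}$-tuples into $\nae_B^{2|A|}$-tuples. This is exactly the place where the hypothesis ``no $\eshop$'' is used: if $\tuple{a}$ is a rainbow $2|A|$-tuple (so $\{a_1,\dots,a_{2|A|}\}=A$) and $f(\tuple{a})$ contained the constant tuple $(b^*,\dots,b^*)$, then $b^* \in f(a_i)$ for every $i$, hence for every $a\in A$, i.e.\ $f^{-1}(b^*)=A$, making $f$ an $\eshop$ --- contradiction. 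So $f(\tuple{a}) \subseteq \nae_B^{2|A|}$, and Theorem~\ref{thm:efao_definability} gives $\LPMC\big((A;\rb_A^{2|A|}),(B;\nae_B^{2|A|})\big) \leq \LPMC\AB$; composing with the NP-hardness above finishes item~1. (Strictly, Theorem~\ref{thm:efao_definability} requires the two templates to share universes $A=C$, $B=D$, which holds here by construction.)

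Item~2 follows by the duality argument of Subsection~\ref{subsec:interesting-fragments}: the map $\phi \mapsto \phi'$ that pushes a negation through reduces $\LPMC\AB$ to $\logic\mbox{-}\PMC(\compl{\rel B},\compl{\rel A})$ up to swapping $\yes/\no$, so $\coNP$-hardness of $\LPMC\AB$ is equivalent to $\NP$-hardness of $\logic\mbox{-}\PMC(\compl{\rel B},\compl{\rel A})$. A multi-valued $f\colon A\to B$ is a surjective multi-homomorphism from $\rel A$ to $\rel B$ if and only if its inverse $f^{-1}\colon B\to A$ is a surjective multi-homomorphism from $\compl{\rel B}$ to $\compl{\rel A}$ (unwinding the definitions: $\tuple{b}\in R^{\rel B}$ whenever each $b_i\in f(a_i)$ and $\tuple{a}\in R^{\rel A}$ is, contrapositively, $\tuple{a}\notin R^{\rel A}$ whenever each $a_i\in f^{-1}(b_i)$ and $\tuple{b}\notin R^{\rel B}$), and under this correspondence an $\ashop$ for $f$ (some $a^*$ with $f(a^*)=B$) becomes an $\eshop$ for $f^{-1}$ (namely $(f^{-1})^{-1}(a^*)=B$). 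Hence ``no $\ashop$ from $\rel A$ to $\rel B$'' is equivalent to ``no $\eshop$ from $\compl{\rel B}$ to $\compl{\rel A}$'', and item~1 applied to $(\compl{\rel B},\compl{\rel A})$ yields the desired $\NP$-hardness, i.e.\ $\coNP$-hardness of $\LPMC\AB$.

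The only genuinely delicate point is the containment $\SMultHom\AB \subseteq \SMultHom\big((A;\rb_A^{2|A|}),(B;\nae_B^{2|A|})\big)$, and as sketched above this is short once one observes that a rainbow tuple has every element of $A$ among its coordinates, so producing a constant image tuple is precisely what an $\eshop$ would do. A secondary bookkeeping obstacle is making sure the seed template is an $\LPMC$ template in the $\efao$ sense and that the universes line up for the application of Theorem~\ref{thm:efao_definability}; both are immediate from the constructions. No new hardness gadget beyond Theorem~\ref{thm:hypergraph-hardness} is needed.
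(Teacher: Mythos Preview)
Your proof is correct and follows essentially the same route as the paper: establish the containment $\SMultHom\AB \subseteq \SMultHom\big((A;\rb_A^{2|A|}),(B;\nae_B^{2|A|})\big)$ from the no-$\eshop$ hypothesis, invoke Theorem~\ref{thm:efao_definability} and Theorem~\ref{thm:hypergraph-hardness}, then use duality for item~2. One small slip: it is \emph{not} true that ``any surjective multi-valued $f\colon A\to B$'' is a multi-homomorphism from $(A;\rb_A^{2|A|})$ to $(B;\nae_B^{2|A|})$ (e.g.\ the all-of-$B$ function sends rainbow tuples onto constant tuples); the correct justification that $(\rel C,\rel D)$ is an $\LPMC$ template is already implicit in your main argument, since $\SMultHom\AB$ is nonempty and you show it sits inside $\SMultHom(\rel C,\rel D)$.
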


\begin{proof}
If there exists no $\eshop$ from $\rel A$ to $\rel B$, then $\SMultHom \AB$ is contained in $\SMultHom((A;\rb_A^{2|A|}),(B;\nae_B^{2|A|}))$.  Theorem~\ref{thm:hypergraph-hardness} and Theorem~\ref{thm:efao_definability} then imply the first item.
The second item follows by the duality argument.
\end{proof}

In the non-promise setting, the absence of $\ashop$s and $\eshop$s is sufficient for $\PSPACE$-hardness~\cite{Madelaine2011tetrachotomy,MadelaineTetrachotomyJournal}. This most involved part of the tetrachotomy result seems much more challenging in the promise setting and we do not have strong reasons to believe that templates without $\ashop$s and $\eshop$s are necessarily $\PSPACE$-hard. Nevertheless, we are able to prove some additional hardness results which will cover  all the extensions of $\logic$.

\begin{proposition} \label{prop:hardness_of_equality}
  $\LPMC((A; =_A),(B; =_B))$ is $\PSPACE$-hard for any $A$, $B$ such that $|A| \geq |B| \geq 2$. 
\end{proposition}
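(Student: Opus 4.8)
The goal is to show that $\efao$-$\PMC((A;=_A),(B;=_B))$ is $\PSPACE$-hard whenever $|A| \ge |B| \ge 2$. The natural strategy is a reduction from the non-promise problem $\efao$-$\MC((B;=_B))$, which is known to be $\PSPACE$-hard for any $B$ with $|B| \ge 2$ (this is the equality-template result cited in the excerpt from~\cite{Martin2008modelchecking}). So I would like to exhibit an $\logic$-sentence transformation $\phi \mapsto \psi$ such that $\rel B \vDash \phi$ (in the single-structure sense) translates into $(A;=_A) \vDash \psi$, and $\rel B \not\vDash \phi$ translates into $(B;=_B) \not\vDash \psi$. Since the signature is the same on both sides (a single binary symbol interpreted as equality), the cleanest formulation is: first handle the case $|A| = |B|$, where the two problems literally coincide and $\PSPACE$-hardness is immediate from~\cite{Martin2008modelchecking}; then reduce the general case $|A| > |B|$ to this by a padding/relativization argument.

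**Key steps.** (1) Recall from Subsection~\ref{subsec:interesting-fragments} and from Proposition~\ref{prop:positive_templates} that $((A;=_A),(B;=_B))$ is indeed an $\LPMC$ template when $|A| \ge |B|$: the map sending all of $A$ onto one fixed element of $B$ and... more carefully, one needs a surjective multi-homomorphism, and one can take $f$ that partitions $A$ into $|B|$ nonempty blocks and sends block $i$ to $\{i\}$ — this is a surjective homomorphism (hence multi-homomorphism) from $(A;=_A)$ to $(B;=_B)$, since it maps $=_A$ into $=_B$. (2) For the core hardness, reduce from $\efao$-$\MC((B;=_B))$. Given an input sentence $\phi$ over the signature $\{=\}$, I want a sentence $\psi$ such that truth of $\phi$ in $\rel B$ is mirrored by truth of $\psi$ in $\rel A$ and falsity of $\phi$ in $\rel B$ is mirrored by falsity of $\psi$ in $\rel B$. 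The idea is to relativize all quantifiers of $\phi$ to a $|B|$-element "core" of $A$ that behaves like $B$. Concretely, fix $b_1, \dots, b_{|B|}$ as a set of free variables (or better, leading existential variables) pinned to be pairwise distinct; then replace $\exists x\, \chi$ by $\exists x\, (\bigvee_i x = b_i) \land \chi$ and $\forall x\, \chi$ by $\forall x\, (\bigvee_i x = b_i) \to \chi$. But $\efao$ has no negation, so implication is unavailable — this is where care is needed. The standard workaround in equality-only logic is that relativization to a definable subset can be pushed through positively: a universally quantified variable ranging over the whole of $A$ can be restricted by observing that in the equality structure, any element not among $b_1,\dots,b_{|B|}$ is "interchangeable" in a controlled way. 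Alternatively, and more robustly, use the duality/relaxation machinery: since $((A;=_A),(B;=_B))$ admits a surjective multi-homomorphism to $((B;=_B),(B;=_B))$ and $((A;=_A),(A;=_A))$ admits one to $((A;=_A),(B;=_B))$, Proposition~\ref{prop:relaxation} gives $\LPMC((B;=_B),(B;=_B)) \le \LPMC((A;=_A),(B;=_B))$ provided $((B;=_B),(B;=_B))$ is a relaxation of $((A;=_A),(B;=_B))$, i.e., provided there are surjective multi-homomorphisms $(B;=_B) \to (A;=_A)$ and $(B;=_B) \to (B;=_B)$. The second is trivial; the first requires a surjective multi-homomorphism from the smaller equality structure onto the larger one, which exists — send each $b$ to a block of $A$, with the blocks covering $A$ and $f(b)$ being a set of elements that are mutually equal... but distinct blocks must also respect $=_A$, which fails since elements of $f(b)$ and $f(b')$ for $b \neq b'$ are unequal in $A$, consistent with $b \ne_B b'$; and within $f(b)$, all elements equal in $A$? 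No — a block with more than one element contains unequal elements of $A$, yet $b =_B b$ demands $f(b) \times f(b) \subseteq {=_B}$... wait, the target is $(A;=_A)$, so we need $f(b) \times f(b) \subseteq {=_A}$, forcing $|f(b)| = 1$. So there is no surjective multi-homomorphism $(B;=_B) \to (A;=_A)$ when $|A| > |B|$, and the relaxation route fails. Hence one genuinely must do the relativization reduction by hand.

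**Main obstacle.** The crux is performing the relativization $\phi \mapsto \psi$ within the negation-free, equality-only fragment $\efao$, so that the $|B|$-element subdomain $\{b_1,\dots,b_{|B|}\}$ of $A$ (pinned distinct by a leading existential block $\exists b_1 \cdots \exists b_{|B|}\,\bigwedge_{i<j} \neg(b_i = b_j)$ — but again negation is forbidden, so instead one must express distinctness positively, which in pure equality logic over a domain of known size is possible only indirectly). I expect the actual argument in the paper avoids this by a cleverer observation about equality structures: perhaps it directly reduces from Quantified Boolean Formulas or from $\efao$-$\MC$ over a two-element equality structure (which suffices since $|B| \ge 2$), using that a two-element domain can always be "found" inside $A$ via two distinct existentially quantified anchor variables, and that universal quantifiers can be safely left ranging over all of $A$ because the relevant formula's truth value in the equality structure depends only on the equality type of the assignment — so over-quantifying does no harm as long as the $\no$-side uses $\rel B$ itself where no padding exists. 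Making precise this "over-quantification is harmless" claim, and checking it interacts correctly with the alternation of $\forall$ and $\exists$ in the prenex form~\eqref{eq:prenex}, is the step I would budget the most effort for; everything else (the template check, membership in $\PSPACE$, logspace-computability of the transformation) is routine.
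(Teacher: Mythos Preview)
Your plan correctly zeroes in on the right source problem (non-promise $\efao$-$\MC$ over an equality structure) and on the genuine difficulty: relativizing quantifiers to a small subdomain without negation. However, the proposal stops exactly where the real work begins. The suggestion that ``over-quantification is harmless because truth depends only on the equality type'' is not correct as stated: the sentence $\forall x\,\exists y\,\forall z\ (z=x)\lor(z=y)$ is true over a two-element domain and false over any larger one, so simply letting universal variables range over all of $A$ while pinning existentials to two anchors will not preserve $\yes$-instances. Something more elaborate is required, and your proposal does not supply it.

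The paper's construction is substantially different from naive relativization. First, it uses relaxation in the direction you did not try: $((A;=_A),([2];=_{[2]}))$ is a relaxation of $((A;=_A),(B;=_B))$ because there is a surjective (multi-)homomorphism $(B;=_B)\to([2];=_{[2]})$, so one may assume $|B|=2$. Then, from $\efao$-$\MC(([2];=_{[2]}))$, the reduction builds $\psi$ by universally quantifying two ``name'' variables $x_1,x_2$, existentially quantifying names $x_3,\dots,x_k$ for the remaining elements of $A$, and taking a big conjunction over \emph{all} functions $f:A\to[2]$ of a formula $\rho_f$ that simulates $\phi$ after projecting each universally quantified value through $f$ (the projection is expressed positively by an auxiliary existential variable $y_i$ and a disjunction $\bigvee_a (y_i'=x_a)\land(y_i=x_{f(a)})$). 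When $\psi$ is evaluated in $\rel B$, the values assigned to $x_1,\dots,x_k$ determine one particular $f$, and the corresponding conjunct $\rho_f$ forces $\rel B\vDash\phi$; when $\psi$ is evaluated in $\rel A$, one sets $x_a=a$ and uses the $\rel B$-witnesses for $\phi$ composed with $f$ to satisfy every $\rho_f$. The disjunct $(x_1=x_2)$ neutralizes the degenerate universal choice. This ``conjunction over all projections'' device is the missing idea in your proposal.
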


Note here that surjective multi-homomorphisms from $(A;=_A)$ to $(B;=_B)$ are exactly the surjective multi-valued functions from $A$ to $B$ of multiplicity one. In particular, if $|A|<|B|$, then $((A;=_A), (B;=_B))$ is not an $\LPMC$ template.
\begin{proof}
We start by noticing that the template $((A;=_A),([2]; =_{[2]}))$ is a relaxation of $(\rel A,\rel B) := ((A; =_A), (B;=_B))$. So by Proposition~\ref{prop:relaxation}, it is enough to prove the claim in the case $B = [2]$. For simplicity, we assume that $A = [k]$ ($k \geq 2$). We prove the $\PSPACE$-hardness by a reduction from $\LMC(\rel B)$, a $\PSPACE$-hard problem by, e.g.,~\cite{Martin2008modelchecking}. Consider an input $\phi$ to $\LMC(\rel B)$ in the special form~(\ref{eq:prenex}), i.e.,
$\phi = \forall y_1 \exists z_1 \forall y_2 \exists z_2 \ldots \forall y_m \exists z_m \ \phi'(\tuple{y}, \tuple{z})$,
where $\phi'$ is quantifier-free.
We need to find a log-space computable formula $\psi$ such that $\rel B \vDash \phi$ implies $\rel A \vDash \psi$ (so that $\yes$-instances of $\MC(\rel B)$ are transformed to $\yes$-instances of $\LPMC \AB$) and $\rel B \vDash \psi$ implies $\rel B \vDash \phi$ (so that $\no$-instances are transformed to $\no$-instances).

The rough idea to construct $\psi$ is to reinterpret the values in $A=[k]$ as values in $B = [2]$ via a mapping $A \to B$. We set
\begin{align}
\psi  &= \forall x_1 \forall x_2  \ \exists x_3 \exists x_4 \ldots \exists x_k \ \ (x_1=x_2) \ \vee \bigwedge_{f: A \to B} \rho_f, \quad  \mbox{ where } \label{eq:psi} \\
    \rho_f & = (\forall y_1' \exists z_1 \ldots \forall y'_m \exists z_m) \ \ (\exists y_1 \ldots \exists y_m) \ 
     \left(\bigwedge_{i=1}^{m} \sigma[f,y_i',y_i] \right)  \land \phi'(\tuple{y},\textbf{z}) \label{eq:rho}\\
    & \sigma[f,y_i',y_i] = \bigvee_{a \in A} \left((y_i' = x_a) \land (y_i = x_{f(a)})\right) \label{eq:sigma}
\end{align}

Observe first that $\psi$ can be constructed from $\phi$ in logarithmic space.

Next, we verify that $\rel B \vDash \psi$ implies $\rel B \vDash \phi$. So, we suppose $\rel B \vDash \psi$ and aim to find witnesses $\beta_1$, \dots, $\beta_m$ for $\rel B \vDash \phi$; to this end, let $\tuple{c}$ be some tuple in $B^m$ that corresponds to evaluations of universally quantified variables in $\phi$. We evaluate the variables $x_1$ and $x_2$ in $\psi$ as $\eval_{B}(x_1)=1$ and $\eval_{B}(x_2)=2$, and pick an evaluation $\eval_{B}(x_3), \dots, \eval_{B}(x_k)$ making $\psi$ true in $\rel B$. Set $f(a) = \eval_{B}(x_a)$, $a \in A$. The first disjunct of (\ref{eq:psi}) is not satisfied, so $\rho_f$ is satisfied with this choice of $\eval_{B}$. When it is the turn to evaluate $y_i'$, we set $\eval_{B}(y_i')=c_i$ and define $\beta_i(c_1, \dots, c_i)=\eval_{B}(z_i)$, where $\eval_{B}(z_i)$ is a satisfactory evaluation of $z_i$. Inspecting the definition (\ref{eq:sigma}), we see that $y_1$, \dots, $y_m$ are necessarily evaluated as $\eval_{B}(y_1)=c_1$, \dots, $\eval_{B}(y_m)=c_m$: indeed, if a disjunct $(y_i'=x_a) \land (y_i = x_{f(a)})$ is satisfied, then $c_i = \eval_{B}(y_i')=\eval_{B}(x_a)$ and $\eval_{B}(y_i) = \eval_{B}(x_{f(a)}) = \eval_{B}(x_{\eval_{B}(x_a)})=\eval_{B}(x_a)$; in particular, $\eval_{B}(y_i) = c_i$.
Therefore, the conjunct $\phi'(\tuple{y},\tuple{z})$ in (\ref{eq:rho}) ensures $\rel B \vDash \phi'(\tuple{c},\beta_1(c_1), \dots, \beta_m(c_1, \dots, c_m))$. As $\tuple{c}$ was chosen arbitrarily, we get that $\beta_1$, \dots, $\beta_m$ are witnesses for $\rel B \vDash \phi$, as required. 

We now suppose that $\beta_1$, \dots, $\beta_m$ are witnesses for $\rel B \vDash \phi$, and aim to show that $\rel A \vDash \psi$.
Because of the first disjunct of (\ref{eq:psi}), it is enough to  consider only evaluations of $x_1$ and $x_2$ with $\eval_{A}(x_1) \neq \eval_{A}(x_2)$. Since any bijection, regarded as a surjective multi-homomorphism from $\rel A$ to $\rel A$ of multiplicity one, preserves $\logic$-formulas (in the sense of Lemma~\ref{lem:surj-homs-formulas}), then we can as well assume that $\eval_{A}(x_1)=1$ and $\eval_{A}(x_2)=2$. We evaluate the remaining $x$ variables as $\eval_{A}(x_a)=a$, $a = 3,4, \dots, k$. We take a function  $f: A \to B$ and argue that $\rho_f$ is satisfied in $\rel A$. Given a selection of $\eval_{A}(y_i')$, we evaluate $z_i$ as $\eval_{A}(z_i) = \beta_i(f(\eval_{A}(y_1')), \dots, f(\eval_{A}(y_i')))$, and we define the evaluation of the remaining variables by $\eval_{A}(y_i)=f(\eval_{A}(y_i'))$. With these choices, each $\sigma[f,y_i',y_i]$ is satisfied because of the disjunct $a=\eval_{A}(y_i')$ in (\ref{eq:sigma}). The second conjunct in (\ref{eq:rho}), $\phi'(\tuple{y},\tuple{z})$, is also satisfied:
we know $\rel B \vDash \phi'(\tuple{c}, \beta_1(c_1), \dots, \beta_m(c_1, \dots, c_m))$ in particular for $c_1 = f(\eval_{A}(y_1'))$, \dots, $c_m = f(\eval_{A}(y_m'))$ and, with this $\tuple{c}$, it is apparent from the choice of evaluations that $\rel B \vDash \phi'(\tuple{c}, \beta_1(c_1), \dots, \beta_m(c_1, \dots, c_m))$ is equivalent to $\rel A \vDash \phi'(\eval_{A}(y_1), \dots, \eval_{A}(y_m), \eval_{A}(z_1), \dots, \eval_{A}(z_m))$. The proof of $\rel A \vDash \psi$ is concluded.
\end{proof}

It follows that $\efaoeq$-$\PMC$ over any template is $\PSPACE$-hard and so is, by the duality argument, $\efaoneq$-$\PMC$.
The next proposition implies $\PSPACE$-hardness for $\efaoneg$-$\PMC$.

\begin{proposition}
   Let $\AB$ be an $\LPMC$ template  which is closed under complementation. Then $\LPMC \AB$ is $\PSPACE$-hard.
\end{proposition}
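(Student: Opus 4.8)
The plan is to reduce from the $\PSPACE$-hard problem $\LPMC((A;=_A),(B;=_B))$ established in Proposition~\ref{prop:hardness_of_equality}, exploiting the fact that once a template is closed under complementation, the two "opposite" relations $R$ and $\compl{R}$ together behave much like an equality/disequality pair, at least as far as distinguishing a single pair of elements is concerned. More precisely, pick any relation symbol $R$ in the signature; since $\emptyset \subsetneq R^{\rel A} \subsetneq A^{\ar(R)}$, both $R^{\rel A}$ and $\compl{R}^{\rel A} = \compl{R^{\rel A}}$ are nonempty, and similarly for $\rel B$. The first step is to observe that by Proposition~\ref{prop:relaxation} it suffices to handle any convenient relaxation; in particular, after composing with a surjective multi-homomorphism we may assume $|B|=2$, i.e.\ $B=[2]$. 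The idea is then to build, from an input $\phi$ to $\LPMC((A;=_A),([2];=_{[2]}))$, a sentence $\psi$ over the signature of $\AB$ which simulates equality on the two-element set $[2]$ using $R$ and $\compl{R}$.

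The key technical step is to produce, from $R$ and $\compl{R}$, an $\efao$-formula in two free variables that on the $\rel B$ side defines (a sub/superset of) the equality relation on two designated elements $b_0,b_1$ of $B$, and likewise on the $\rel A$ side. Concretely: fix tuples $\tuple{p}\in R^{\rel B}$ and $\tuple{q}\in\compl{R^{\rel B}}$; since these tuples differ, there is a coordinate $t$ with $p_t\neq q_t$, giving us two distinct elements of $B$. Using existential quantification to "plug in" the remaining coordinates of $\tuple{p}$ and $\tuple{q}$ appropriately, one obtains formulas $\theta_R(u,v)$ and $\theta_{\compl R}(u,v)$ that separate the diagonal pair from the off-diagonal pair on $\{p_t,q_t\}$; combining with the guard "$\forall$-quantified dummy variables" trick from the proof of Proposition~\ref{prop:hardness_of_equality} (equations~(\ref{eq:psi})--(\ref{eq:sigma})) lets one reinterpret every occurrence of "$=$" in $\phi$ by the appropriate $R$/$\compl R$ gadget. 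Because the template is closed under complementation, the same syntactic gadget computes the analogous relation on the $\rel A$ side, so the correctness argument of Proposition~\ref{prop:hardness_of_equality} — choosing witnesses on one side from witnesses on the other, with the reinterpretation map $f\colon A\to B$ — goes through essentially verbatim, once we have verified that the $R$-gadget really does pin the relevant variables down to the two-element "simulated domain".

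The main obstacle I expect is exactly this pinning-down: unlike genuine equality, an arbitrary relation $R$ of arity $\geq 2$ need not let us isolate a \emph{single} distinguished pair of elements cleanly — the "slice" of $R^{\rel B}$ obtained by fixing all but one coordinate may contain both elements of $[2]$, or the two relations $R$ and $\compl R$ might not jointly determine a usable two-valued structure on the nose. The remedy is to be flexible about which two elements of $B$ we call "$0$" and "$1$": we do not need equality on a prescribed pair, only equality on \emph{some} two-element subset, and then we transport the hardness of $\LPMC((B';=_{B'}),([2];=_{[2]}))$ for that subset $B'$ via relaxation (Proposition~\ref{prop:relaxation}), noting $|B'|\geq 2$. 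One also has to check that the gadget, on the $\rel A$ side, defines a relation that is at least as large as its $\rel B$-counterpart's preimage under $f$, which is automatic from closure under complementation together with Lemma~\ref{lem:surj-homs-formulas} applied to bijections of $A$. Assembling these pieces — relaxation to a two-element second structure, construction of the $R$/$\compl R$ equality gadget, substitution into the guarded sentence~(\ref{eq:psi}), and the two implications "$\rel B\vDash\psi\Rightarrow\rel B\vDash\phi$" and "$\rel B\vDash\phi\Rightarrow\rel A\vDash\psi$" as in Proposition~\ref{prop:hardness_of_equality} — yields the desired $\PSPACE$-hardness, and hence, by the reasoning preceding the proposition, $\PSPACE$-hardness of $\efaoneg$-$\PMC$ over every template.
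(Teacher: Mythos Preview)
Your high-level target---reducing from the equality $\LPMC$ of Proposition~\ref{prop:hardness_of_equality}---matches the paper's, but the route has a genuine gap at the very first step. You assert that via Proposition~\ref{prop:relaxation} one may pass to a relaxation with $|B|=2$, i.e., that there is always a surjective multi-homomorphism from $\rel B$ onto some two-element structure in the same signature. This is false. Take $\rel A=\rel B=([3];R,\compl{R})$ with $R=\{(1,2,3)\}$. Any surjective function $f\colon[3]\to[2]$ identifies two of $1,2,3$; if, say, $f(1)=f(2)$, then $(1,2,3)\in R$ and $(2,1,3)\in\compl{R}$ map to the same tuple, which would have to lie in both $R^{\rel B'}$ and its complement. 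And if some $f(a)=[2]$, then the constant tuple $(a,a,a)\in\compl{R}^{\rel A}$ forces $[2]^3\subseteq\compl{R}^{\rel B'}$, hence $R^{\rel B'}=\emptyset$, contradicting $\emptyset\neq f(1,2,3)\subseteq R^{\rel B'}$. So no two-element target exists here, and your later fallback to a two-element \emph{subset} $B'\subseteq B$ does not help either, since relaxation requires a \emph{surjective} multi-homomorphism $\rel B\to\rel B'$, not an inclusion. The explicit gadget sketch is separately problematic---``existentially plugging in the remaining coordinates of $\tuple{p},\tuple{q}$'' presupposes constants that $\efao$ does not provide---but the argument already fails before that point.

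The paper sidesteps all of this by working with a canonical quotient rather than a two-element one. It defines $a\sim_A a'$ (and $\sim_B$ analogously) when $a,a'$ play the same role in every relation of the structure, and uses closure under complementation exactly once, to show that every $f\in\SMultHom\AB$ sends $\sim_A$-related pairs to $\sim_B$-related pairs. Theorem~\ref{thm:efao_definability} then yields $\LPMC((A;\sim_A),(B;\sim_B))\leq\LPMC\AB$ with no explicit gadget at all, and a relaxation---now legitimate, via the quotient maps---reduces further to $\LPMC(([m];=_{[m]}),([n];=_{[n]}))$ with $m\geq n\geq 2$ the numbers of $\sim$-classes, so Proposition~\ref{prop:hardness_of_equality} applies. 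In the counterexample above one gets $m=n=3$, and the paper's argument goes through where yours cannot.
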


\begin{proof} 
Suppose that $(\rel A,\rel B)$ is closed under complementation. We define an equivalence relation $\sim_A$ on $A$ by considering two elements equivalent if they play the same role in every relation of $\rel A$. Formally, $a \sim a'$ if for every symbol $R$ from the signature, every coordinate $i \in [\ar(R)]$, and every $\tuple{c}, \tuple{c}' \in A^{\ar(R)}$, if $c_i=a$, $c_i'=a'$, $c_j=c'_j$ for all $j \in [\ar(R)] \setminus \{i\}$, and $\tuple{c} \in R^{\rel A}$, then  $\tuple{c}' \in R^{\rel A}$. We define an equivalence relation $\sim_B$ on $B$ analogously. Notice that $\sim_A$ (resp., $\sim_B$) is indeed an equivalence relation; let $m$ and $n$ denote the number of equivalence classes of $\sim_A$ and $\sim_B$, respectively. 

Observe that $m,n \geq 2$. Indeed, otherwise any nonempty relation in the corresponding template contains all the tuples, and we do not allow such structures in this paper.

Let $\rel C = (A; \sim_A)$ and $\rel D = (B; \sim_B)$. We claim that every surjective multi-homomorphism $f$ from $\rel A$ to $\rel B$ preserves $\sim$, i.e., is a surjective multi-homomorphism from $\rel C$ to $\rel D$. Consider $a, a' \in A$, and $b,b' \in B$ such that $a \sim_A a'$, $b \in f(a)$, and $b' \in f(a')$. In order to prove $b \sim_B b'$, take arbitrary $R$, $i$, $\tuple{d}$, $\tuple{d}'$ such that $d_i=b$, $d_i'=b'$, $d_j=d'_j$ for all $j \neq i$, and $\tuple{d} \in R^{\rel B}$.
Let $\tuple{c}, \tuple{c}' \in A^{\ar(R)}$ be tuples such that $c_i=a$, $c'_i=a'$, and $c_j=c'_j \in f^{-1}(d_j)$ for all $j \neq i$ (which exist as $f$ is surjective). 
If $\tuple{c} \not\in R^{\rel A}$, then
$\tuple{c} \in \compl{R}^{\rel A}$ and, consequently, $\tuple{d} \in f(\tuple{c}) \subseteq \compl{R}^{\rel B}$ (as $f$ is a surjective multi-homomorphism from $\rel A$ to $\rel B$), a contradiction with $\tuple{d} \in R^{\rel B}$. Therefore, $\tuple{c} \in R^{\rel A}$ and also $\tuple{c}' \in R^{\rel A}$ as $a \sim_A a'$. Now $\tuple{d}' \in f(\tuple{c}') \subseteq R^{\rel B}$, and $b \sim_B b'$ follows. 

By Theorem~\ref{thm:efao_definability}, $\LPMC \CD \leq \LPMC \AB$. 
Since there exists a surjective multi-valued function from $A$ to $B$ that preserves $\sim$ (namely, any $f \in \SMultHom \AB$), we also know that $m \geq n$. The template $(\rel E,\rel F) := (([m]; =_{[m]}),([n]; =_{[n]}))$ is a relaxation of $\CD$, because there exists a surjective multi-homomorphism from $\rel E$
to $\rel C$ (a multi-valued function that maps $i$ to the $i$-th equivalence class of $\sim_A$ under an arbitrary linear ordering of classes)  and a surjective multi-homomorphism from $\rel D$ to $\rel F$
(a ``multi''-valued function that maps every element in the $i$-th equivalence class of $\sim_B$ to $\{i\}$).
By Proposition~\ref{prop:relaxation}, $\LPMC (\rel E,\rel F) \leq \LPMC \CD$; therefore, $\LPMC(\rel E,\rel F) \leq \LPMC \AB$. The former $\LPMC$ is $\PSPACE$-hard by Proposition~\ref{prop:hardness_of_equality}, so $\LPMC \AB$ is $\PSPACE$-hard, too.
\end{proof}

\subsection{Summary and examples} \label{subsec:examples}

The claims stated in Figure~\ref{fig:result} are now immediate consequences of the obtained results. Note that the claims remain  true without the imposed restrictions on structures (i.e., we can allow singleton universes, nullary relations, etc.); the only nontrivial ingredient is the $\Lspace$-membership of the Boolean Sentence Value Problem~\cite{Lynch77}.

We observe that the results imply  a complete complexity classification in the case that one of the two template structures is \emph{Boolean}, i.e., has a two-element universe.

\begin{corollary}[$\logic=\efao$]
Let $\AB$ be an $\LPMC$ template.
\begin{enumerate}
    \item If $\rel B$ is Boolean, then $\LPMC \AB$ is in $\Lspace$, or is $\NP$-complete, or $\PSPACE$-complete.
    \item If $\rel A$ is Boolean, then $\LPMC \AB$ is in $\Lspace$, or is $\coNP$-complete, or $\PSPACE$-complete.
   \item If $\rel A$ and $\rel B$ are Boolean, then $\LPMC \AB$ is  in $\Lspace$, or is $\PSPACE$-complete.
\end{enumerate}
\end{corollary}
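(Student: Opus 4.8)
The plan is to derive all three items directly from the complexity results already assembled in this section, specializing the structural conditions of Figure~\ref{fig:result} to the case where one side of the template has a two-element universe.

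\textbf{Item 1 (}$\rel B$\textbf{ Boolean).} Suppose $|B|=2$. I would argue that $\AB$ must admit an $\ashop$: by Proposition~\ref{prop:positive_templates} there is some surjective multi-homomorphism $f$ from $\rel A$ to $\rel B$, and I would check whether some $f(a)$ equals $B$. If yes, $f$ is already an $\ashop$ and Theorem~\ref{thm:tractability}(1) places $\LPMC\AB$ in $\NP$; combined with the $\NP$-hardness half, I would then split on whether an $\eshop$ also exists. If $\AB$ admits an $\eshop$ as well, then since $|B|=2$ one can upgrade to an $\aeshop$ (an $\eshop$ gives $b^*$ with $f^{-1}(b^*)=A$; because $|B|=2$, replacing the value of a single suitable element by all of $B$ stays inside a multi-homomorphism, as in the construction in the proof of Theorem~\ref{thm:tractability}(3)), so $\LPMC\AB\in\Lspace$. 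If no element has $f(a)=B$, then every value $f(a)$ is a singleton, so $f$ has multiplicity one and is an ordinary surjective homomorphism onto a two-element structure; I would then observe that this forces some $b^*\in B$ with $f^{-1}(b^*)=A$ impossible unless... — more robustly, I would instead invoke Proposition~\ref{prop:hardness_of_equality} together with the relaxation machinery: the "$\sim$" reduction in the proof of the last proposition already shows that absence of good smuhoms pushes us toward the equality template, which is $\PSPACE$-hard, and since $\LPMC\AB$ is always in $\PSPACE$ we get $\PSPACE$-completeness. So the trichotomy $\Lspace$ / $\NP$-complete / $\PSPACE$-complete follows by case analysis on the existence of an $\ashop$ and an $\eshop$, using that over a two-element codomain ``$\ashop$ and $\eshop$'' collapses to ``$\aeshop$''.

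\textbf{Item 2 (}$\rel A$\textbf{ Boolean).} This is dual to item~1 via the duality argument from Subsection~\ref{subsec:interesting-fragments}: $\LPMC\AB$ has the dual complexity to $\efao$-$\PMC(\compl{\rel B},\compl{\rel A})$, in which the second structure $\compl{\rel A}$ is Boolean, so item~1 applies to that problem and transposing the classification ($\NP$-complete $\leftrightarrow$ $\coNP$-complete, $\PSPACE$-complete and $\Lspace$ self-dual) yields the claim. \textbf{Item 3 (both Boolean)} then follows by applying item~1 and item~2 simultaneously: the $\NP$-complete option of item~1 and the $\coNP$-complete option of item~2 cannot both genuinely occur unless the problem is actually in $\Lspace$, because an $\ashop$-but-no-$\eshop$ template with $|A|=|B|=2$ would, upon inspection of the two-element smuhoms, always also admit an $\eshop$; hence only $\Lspace$ and $\PSPACE$-complete remain.

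The main obstacle I anticipate is the bookkeeping in item~1: one must carefully verify that when $|B|=2$ the simultaneous presence of an $\ashop$ and an $\eshop$ really does yield an $\aeshop$, and that the only remaining possibility — no $\ashop$ — forces $\PSPACE$-hardness rather than some intermediate behaviour. For the former, I would reuse verbatim the ``contain $f'$ in $f$'' trick from the proof of Theorem~\ref{thm:tractability}(3), which is clean when the codomain has size two. For the latter, I would lean on the last proposition's reduction to the equality template $(([m];=_{[m]}),([n];=_{[n]}))$ via $\sim_A,\sim_B$ and Proposition~\ref{prop:hardness_of_equality}, checking that Boolean-ness of $\rel B$ keeps $n\ge 2$ so that the hardness source is available. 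Everything else is routine combination of Theorems~\ref{thm:tractability}, \ref{thm:efao-hardness}, the duality argument, and the universal $\PSPACE$ upper bound.
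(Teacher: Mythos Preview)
Your overall case split is right, but two of the key justifications are off, and one of them is a real gap.

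\textbf{The $\PSPACE$-hardness step.} When $|B|=2$ and there is no $\ashop$, you correctly observe that every $f\in\SMultHom\AB$ has multiplicity one. At that point the argument is one line: a multiplicity-one surjective multi-valued function is exactly a surjective map $A\to B$, which automatically lies in $\SMultHom((A;=_A),(B;=_B))$; hence $\SMultHom\AB\subseteq\SMultHom((A;=_A),(B;=_B))$, and Theorem~\ref{thm:efao_definability} together with Proposition~\ref{prop:hardness_of_equality} (note $|A|\ge 2=|B|$) gives $\PSPACE$-hardness. You abandon this line mid-sentence and instead appeal to the $\sim_A,\sim_B$ machinery from the complementation proposition. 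That proposition, however, \emph{uses} closure under complementation in an essential way: the proof that every surjective multi-homomorphism respects $\sim$ invokes $\compl{R}$ to derive a contradiction. For a general $\LPMC$ template there is no reason smuhoms should respect $\sim_A,\sim_B$, so the reduction you propose does not go through. This is the genuine gap; the fix is simply to finish the multiplicity-one argument you started.

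\textbf{The $\eshop\Rightarrow\aeshop$ step.} Your justification (``replacing the value of a single suitable element by all of $B$ stays inside a multi-homomorphism'') is the wrong direction: enlarging an $f(a)$ need not preserve the multi-homomorphism property, and the construction in Theorem~\ref{thm:tractability}(3) goes the other way (it \emph{shrinks} an $\aeshop$). The correct observation, which is what the paper uses, is that when $|B|=2$ any $\eshop$ $g$ is already an $\aeshop$: we have $b^*\in g(a)$ for all $a$, and by surjectivity the other element $b'$ of $B$ lies in some $g(a^*)$, so $g(a^*)=\{b^*,b'\}=B$. No modification of $g$ is needed. (Dually, when $|A|=2$ every $\ashop$ is an $\aeshop$, which is what makes Item~3 work cleanly.)

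With these two corrections your case analysis matches the paper's: $\eshop\Rightarrow\aeshop\Rightarrow\Lspace$; no $\eshop$ but an $\ashop\Rightarrow\NP$-complete; no $\ashop\Rightarrow$ multiplicity one $\Rightarrow\PSPACE$-complete.
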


\begin{proof}
If $\rel B$ is Boolean, then every $\eshop$ (from $\rel A$ to $\rel B$) is an $\aeshop$. Moreover, if there is no $\ashop$, then every surjective multi-homomorphism is of multiplicity one, so it is also a multi-homomorphism from $(A; =_A)$ to $(B; =_B)$. The first item now follows from Proposition~\ref{prop:hardness_of_equality} and Theorem~\ref{thm:efao_definability}. The other items are easy as well.
\end{proof}

There are two wide gaps left for further investigation. First, it is unclear what the complexity is for the $\LPMC$ over templates that admit both an $\ashop$ and an $\eshop$, but no $\aeshop$. While there is no such a digraph template, there are examples with one ternary or two binary relations, e.g., the following. We use $ij$ as a shortcut for the pair $(i,j)$.
\begin{align*}
\rel A &= ([3]; \ \{(1,2,3)\}), \quad
\rel B = ([3]; \ \{1,2,3\} \times \{2\} \times \{3\} \ \cup \ \{1,2\} \times \{2\} \times \{2,3\}) \\
\rel A &= ([3]; \ \{12\}, \ \{13\}), \quad
\rel B = ([3]; \ \{12,22,32\}, \ \{12,13,22,23,33\})
\end{align*}

The second gap is between simultaneous $\NP$- and $\coNP$-hardness, and $\PSPACE$-hardness, when the template admits neither an $\ashop$ nor an $\eshop$. Examples with unknown complexity include the following.
\begin{align*}
    \rel A &= ([3]; \ \{(1,2,3)\}), \quad
    \rel B = ([3]; \  \{2,3\} \times \{1,3\} \times \{1,2\}) \\
    \rel A &= ([3]; \ \{(1,2,3)\}), \quad \rel B = ([3]; \{1,2\} \times \{1,2\} \times \{3\} \ \cup \ \{1,3\} \times \{2\} \times \{2\}) \\
    \rel A &= ([4]; \ \{12,34\}), \quad
    \rel B = ([4]; \  \{12,13,14,23,24,34,32\})
\end{align*}
In an ongoing work, we have developed some more general $\PSPACE$-hardness criteria, but the examples above remain elusive. The following equivalent \emph{unary} version of the first example is an especially interesting template, whose $\LPMC$ is the problem described in the introduction.
\begin{align*}
    \rel A &= ([3]; \ \{1\}, \{2\}, \{3\}), \quad
    \rel B = ([3]; \ \{2,3\}, \{1,3\}, \{1,2\})
\end{align*}

\section{Conclusion}

We gave a full complexity classification of $\eao$-$\PMC$, initiated an algebraic approach to $\efao$-$\PMC$, and applied it to provide several complexity results about this class of problems. 

An interesting concrete problem, whose complexity is currently open, is the $\efao$-$\PMC$ over the unary template above. As for the theory-building, the next natural step is to capture more complex reductions by means of surjective multi-homomorphisms; namely, the analogue of pp-constructions, which proved to be so useful in the theory of (Promise) CSPs~\cite{Barto2017polymorphisms,Barto2021algebraicPCSP}. It may be also helpful to characterize and study the sets of surjective multi-homomorphisms in the spirit of \cite{MartinLattice,Carvalho2021}.

\bibliography{pmc}

\appendix 

\section{Proof of Proposition~\ref{prop:digraphs}}
\label{append}

Denote by $R$ the unique binary symbol in the signature. Let $f$ be an $\ashop$ from $\rel A$ to $\rel B$ with $f(a^*)=B$ and let $g$ be an $\eshop$ from $\rel A$ to $\rel B$ with $g^{-1}(b^*)=A$.

If $a^*$ is isolated in $\rel A$ (i.e., $(a,a^*), (a^*,a) \notin R^{\rel A}$ for every $a \in A$), then we define a surjective multi-valued function $h$ by $h(a^*)=B$ and $h(a)=\{b^*\}$ for every $a \neq a^*$. It is a multi-homomorphism from $\rel A$ to $\rel B$ since for any $(a,a') \in R^{\rel A}$, we have $h(a,a') = \{(b^*,b^*)\}$, which is contained in $R^{\rel B}$ because $R^{\rel A}$ is nonempty, so $g(R^{\rel A}) \ni (b^*,b^*)$.

Suppose next that there is an edge $(a_1,a^*) \in R^{\rel A}$ but $a^*$ has no outgoing edges in $\rel A$. Let $b_1$ be an arbitrary element from $f(a_1)$ and define $h$ by $h(a^*)=B$ and $h(a)=\{b_1\}$ for every $a \neq a^*$. To verify that $h \in \SMultHom \AB$, consider an edge $(a,a') \in R^{\rel A}$. As $a^*$ has no outgoing edges in $\rel A$, we get $a \neq a^*$, so $h(a) = \{b_1\}$. Now $h(a,a') \subseteq \{b_1\} \times B$, which is contained in $R^{\rel B}$ because $R^{\rel B} \supseteq f(a_1,a^*) \supseteq \{b_1\} \times B$.

If $a^*$ has an outgoing edge $(a^*,a_1) \in R^{\rel A}$ but no incoming edges, we proceed similarly, defining $h(a^*) = B$ and $h(a)=\{b_1\}$ for all $a \neq a^*$, where $b_1$ is an arbitrary element from $f(a_1)$.

Finally, suppose that $(a_1,a^*) \in R^{\rel A}$ and $(a^*,a_2) \in R^{\rel A}$ for some $a_1, a_2 \in A$. If there is an element $a_3 \in A$ with no outgoing (resp., incoming) edges, define $h$ by $h(a_3)=B$ and $h(a)=\{b'\}$ for all $a \neq a_3$, where $b'$ is an arbitrary element from $f(a_1)$ (resp., $f(a_2)$). If there is no such element $a_3$, then we define $h(a^*) = B$ and $h(a) = \{b^*\}$ for all $a \neq a^*$. Since $g$ is surjective, and every $a \in A$ has both an incoming and an outgoing edge, then $(b,b^*) \in R^{\rel B}$ and $(b^*,b) \in R^{\rel B}$ for all $b \in B$, therefore, $h\in \SMultHom \AB$.

The proof of Proposition~\ref{prop:digraphs} is concluded.

\end{document}